\documentclass[preprint,12pt,authoryear]{elsarticle}
\usepackage[utf8]{inputenc}
\usepackage{amsmath,bm,graphicx,booktabs}
\usepackage{geometry}
\usepackage{indentfirst,float,appendix,lmodern}
\usepackage{multirow}
\usepackage[authoryear]{natbib}
\usepackage{algorithm,algorithmic}
\usepackage{hyperref}
\urlstyle{same}
\usepackage[english]{babel}
\usepackage{amsthm}
\usepackage{color}
\usepackage{courier}
\usepackage{amssymb}
\usepackage{mathrsfs}
\usepackage{enumitem}
\usepackage{caption}
\usepackage{subcaption}
\usepackage{comment}
\usepackage{setspace}
\usepackage{enumitem}

\usepackage{multirow}

\newcommand{\bc}{\color{black}}

\allowdisplaybreaks





\newcommand{\Cc}{\mathcal{C}}

\newcommand{\Ec}{\mathcal{E}}

\newcommand{\Gc}{\mathcal{G}}

\newcommand{\Tc}{\mathcal{T}}

\newcommand{\Vc}{\mathcal{V}}

\newtheorem{thm}{Theorem}

\newtheorem{cor}{Corollary}
\newtheorem{lem}{Lemma}

\newtheorem{rem}{Remark}


\newcommand{\vbeta}{\boldsymbol \beta}
\newcommand{\veta}{\bm{\eta}}
\newcommand{\vxi}{\bm{\xi}}
\newcommand{\vvarsigma}{\bm{\varsigma}}




 \journal{Journal}      

\begin{document}

\begin{frontmatter}

\title{
Simultaneously detecting spatiotemporal changes with penalized Poisson regression models}

\author{Zerui Zhang}

\author[3]{Xin Wang\corref{cor1}}
\ead{xwang14@sdsu.edu}

\cortext[cor1]{Corresponding author}

\author{Xin Zhang}

\author{Jing Zhang}


%
\affiliation[3]{organization={Department of Mathematics and Statistics, San Diego State University},
            addressline={5500 Campanile Drive}, 
            city={San Diego},
            postcode={92182}, 
            state={CA},
            country={USA}}


\begin{abstract}
In the realm of large-scale spatiotemporal data, abrupt changes are commonly occurring across both spatial and temporal domains. To address the concurrent challenges of detecting change points and identifying spatial clusters within spatiotemporal count data, an innovative method is introduced based on the Poisson regression model. The proposed method employs doubly fused penalization to unveil the underlying spatiotemporal change patterns. To efficiently estimate the model, an iterative shrinkage and threshold based algorithm is developed to minimize the doubly penalized likelihood function. The reliability and accuracy is confirmed by the statistical consistency properties. Furthermore, extensive numerical experiments are conducted to validate the theoretical findings, thereby highlighting the superior performance of the proposed method when compared to existing competitive approaches. 
\end{abstract}

\begin{keyword}
Change points detection \sep Fused penalty \sep Minimum spanning tree \sep Poisson regression model \sep Spatial clustering \sep Spatiotemporal data.   
\end{keyword}
\end{frontmatter}

\section{Introduction}
\label{sec:intro}

In disease and epidemiology studies, datasets are commonly represented in the format of spatiotemporal count data \citep{schmid2004bayesian, tzala2008bayesian}. This data format encapsulates the total number of identified cases within contiguous non-overlapping areal units over consecutive time periods, offering a comprehensive perspective on the spatial and temporal dynamics of disease prevalence \citep{ansari2020spatiotemporal}. One salient characteristic in the spatiotemporal count data is that observations from geographically proximate areal units and temporally close periods tend to exhibit more similar values than those farther apart \citep{hardisty2010analysing}. The explicit identification and quantification of fluctuating and changing patterns over both spatial and temporal domains emerge as critical research objectives. This is particularly essential in disease management, where timely insights derived from data mining facilitate the swift implementation of prevention and control measures \citep{kulldorff2001prospective, kulldorff2005space}. Beyond public health and epidemiology, the implications of spatiotemporal count data change detection reverberate across diverse domains, including agriculture \citep{besag1999bayesian, paradinas2017spatio, zhang2023simulation}, environment \citep{gusev2008temporal, lee2021clustered}, and social science \citep{law2014bayesian}.

Within spatiotemporal count data, the exploration of changes reveals two prominent dimensions: alterations in spatial relationships and shifts in the temporal domain. First, for spatial relationships, clustering has been widely studied and applied, and \cite{ansari2020spatiotemporal} gave a comprehensive review of spatiotemporal clustering approaches. \cite{anderson2017spatial} employed a Poisson regression model to identify spatial clusters in the yearly counts of respiratory admissions to hospitals. \cite{napier2019bayesian} proposed a new Bayesian approach to identify groups of areal units with similar temporal disease trends. 
\cite{siljander2022spatiotemporal} used a Poisson space–time scan statistic to detect clusters varied over both time and space in Helsinki. \cite{kulldorff2001prospective} introduced a space-time scan statistic based on Poisson likelihood, which was applied in different studies \citep{rogerson2001monitoring,guemes2021syndromic,mohammadi2022homicide}. \cite{assunccao2007space} proposed a scored based space-time scan for point processes data. \cite{frevent2021detecting} and \cite{smida2022wilcoxon} proposed distribution-free scan statistics for detecting spatial clusters in functional data. \cite{kamenetsky2022regularized} used a regularized approach to detect spatial clusters, considering time effects through a Poisson regression model. Furthermore, there are methods taking into consideration the effects of covariates in the context of spatiotemporal count data. \cite{jung2009generalized} constructed a scan statistic based on generalized linear models to adjust the effects of covariates for spatial data. \cite{lee2017cluster} and \cite{lee2020spatial} considered spatial cluster detection for regression coefficients based on hypothesis tests. \cite{lee2021clustered} extended the approach to spatiotemporal data based on a varying coefficient regression model. In recent studies, the penalization method has been adopted to discover model-based clusters through regression coefficients in spatial data \citep{li2019spatial,wang2019clustering,ma2020exploration,wang2019spatial,LIN2022105023, wang2024scanner}. 
Specifically,  these approaches incorporate the fusion penalty to have sparsity in the differences of model coefficients. A zero coefficient difference implies that two locations will have the same estimated coefficients, which indicates that they come from the same cluster. 

When delving into the changes over the time domain, temporal changes are often studied separately from spatial clustering through change point detection.
For example, the CUSUM procedure (or cumulative summation) is a well-known method for temporal change-point detection \citep{cho2015multiple,cho2016change,gromenko2017detection,wang2018high}.
The CUSUM transformation generates test statistics, which will be compared with the standard Brownian Bridge to test the existence of the change point. 
However, the traditional CUSUM method cannot be easily applied to detect temporal change points over large-scale spatial domains.
\cite{altieri2015changepoint} introduced the Bayesian approach by log-Gaussian Cox process model, and the posterior distribution of the potential change-point distribution assumed the spatial homogeneous setup. Score statistics that capture changes in the mean and/or the spatiotemporal covariance were discussed under the spatiotemporal data in \cite{xie2012spectrum}. \cite{harchaoui2010multiple} used fused lasso to detect multiple change points in time series data and was extended to autoregressive time series models 
\citep{chan2014group} and least absolute deviation based models \citep{li2020robust}. {\bc \cite{zhao2024composite} proposed an approach based on a composite likelihood to detect multiple change points in a nonstationary spatio-temporal process}. Note that none of the temporal change-point detection methods mentioned above can handle the covariate effects.

However, to the best of our knowledge, there is currently no penalization method in the literature aforementioned that can simultaneously address the tasks of change points detection and spatial clusters identification with the consideration of covariates. In this work, we propose a novel penalized approach to simultaneously address the tasks of detecting change points and identifying spatial clusters for count data. We formulate an
optimization problem based on Poisson likelihood and two penalty terms: a fused penalty for detecting change points and an adaptive spanning tree based fusion penalty for identifying spatial clusters. We develop an algorithm to solve the doubly-penalized estimation problem. We also investigate the theoretical properties of our proposed estimator. It is shown that our estimator is statistically consistent and has the capability to reveal spatial clusters
and temporal changes with probability one. Our theoretical findings are also validated by thorough numerical experiments. Besides that, covariates are also incorporated in the modeling to ensure that the detected spatial and temporal changes are robust and appropriately adjusted for relevant covariate influences.

The article is organized as follows. 
In Section \ref{Sec:model}, we will propose our model and develop the model estimation method. 
In Section \ref{Sec:theory}, we will establish the theoretical properties of our proposed estimator. In Section \ref{Sec:sim}, we will conduct simulation studies to evaluate our proposed approach under different scenarios. The proposed approach will be applied to a real example 
in Section \ref{Sec:example}.

\section{Methodology}
\label{Sec:model}

In Section \ref{subsec:model}, we introduce the statistical model and the optimization problem. And in Section \ref{subsec:algorithm}, we present the proposed algorithm to solve the optimization problem.

\subsection{Statistical Model}
\label{subsec:model}
Let $y_{it}$ be the observed count and  $n_{it}$ be the population size for location $i$ at time $t$, where $i=1,\dots, N$ and $t=1,\dots, T$. 
Consider the following Poisson regression model,
\begin{equation}
\label{eq:poisson_model}
   y_{it}\sim \text{Poisson}\left(n_{it}\mu_{it}\right), 
\end{equation}
where $\mu_{it}$ is the rate with $E(y_{it}) = n_{it}\mu_{it}$. If $y_{it}$ denotes the number of deaths for a specific cancer, then $\mu_{it}$ represents the mortality rate. We will model $\mu_{it}$ based on covariates effects and time effects. Generally, we assume two types of covariates, as in \cite{ma2020exploration} and \cite{wang2019spatial}. Specifically,  $\bm{z}_{it}$ is the covariate vector with dimension $q$, which has common effects across all locations, and $\bm{x}_{it}$ is the covariate vector with dimension $p$, which has location-specific effects. Assume the following model for $\mu_{it}$,
\begin{equation}
\label{eq:rate}
\log\mu_{it}= \bm{z}_{it}^{\top}\bm{\alpha} + \bm{x}_{it}^{\top}\bm{\beta}_{i}+\eta_{t},
\end{equation}
where $\bm{\alpha}$ represents the vector of common regression coefficients shared by global effects, $\bm{\beta}_i$'s are location-specific regression coefficients, and $\eta_{t}$ is the time effect. A special case is that  $\bm{x}_{it} =1$, which represents the location-specific intercept. Then, the model will have a simplified form,
\begin{equation}
\label{eq:rate_intercept}
    \log\mu_{it}= \bm{z}_{it}^{\top}\bm{\alpha} + \beta_i+\eta_{t}.
\end{equation}

Assume that $N$ locations are from $K$ underlying spatial clusters $\{\Cc_k\}_{k=1}^K$, where $\mathcal{C}_k$ contains locations belonging to cluster $k$. That is $\vbeta_i = \vbeta_{i^\prime}$ if location $i$ and location $i^{\prime}$ are both in cluster $k$, for $k = 1,\dots, K$. Furthermore, assume that there are $J$ change points $t_j^*$ such that $\eta_t = \tau_{j}$ if $t^*_{j-1} \leq t \leq t^*_j - 1$ for $j=1,\dots, J+1$ with $t^*_0 = 1$ and $t^*_{J+1} = T+1$. For an identifiability purpose, we assume that $\eta_1 = 0$, indicating that $\tau_1 = 0$. Our goal is to use observed data to estimate the number of clusters $\hat{K}$, the corresponding cluster structure $\hat{\mathcal{C}} = \{\hat{\mathcal{C}}_1,\dots, \hat{\mathcal{C}}_{\hat{K}}\}$, the estimated regression coefficients $\hat{\bm{\alpha}}$, $\hat{\bm{\beta}}_i$, the number of changed points $\hat{J}$, and the corresponding time effects $\hat{\eta}_t$.

To achieve the goal, we will construct an optimization problem based on the following likelihood function and two sets of penalty functions. Let $\bm{\eta} = (\eta_1,\eta_2,\dots,\eta_T)^\top$ and $\bm{\beta} = (\bm{\beta}^\top_1,\dots,\bm{\beta}^\top_N)^\top$, the negative loglikelihood function based on the model in \eqref{eq:poisson_model} and \eqref{eq:rate} is
\begin{equation}
\label{Eq:nll}
   l_0\left(\bm{\alpha},\bm{\eta},\bm{\beta}\right)=\sum_{i=1}^{N}\sum_{t=1}^{T}\left[-y_{it}\left(\log n_{it}+\bm{z}_{it}^{\top}\bm{\alpha} + \bm{x}_{it}^{\top}\bm{\beta}_{i}+\eta_{t}\right)+n_{it}\exp\left(\bm{z}_{it}^{\top}\bm{\alpha} + \bm{x}_{it}^{\top}\bm{\beta}_{i}+\eta_{t}\right)\right].
\end{equation}

To find change points, we consider a fused type penalty on time effects as proposed in \cite{harchaoui2010multiple}, which has the following form,
\begin{equation}
    \label{Eq:pen1}
     \sum_{t=2}^{T}\mathcal{P}_{\gamma_{1}}\left(\vert\eta_{t}-\eta_{t-1}\vert,\lambda_{1}\right), 
\end{equation}
{\bc where $\mathcal{P}_{\gamma_{1}}(\cdot,\lambda_1)$ is a penalty function. There are different penalty functions in the literature, such as $L_1$ \citep{tibshirani1996regression}, the adaptive lasso \citep{zou2006adaptive}, the smoothly clipped absolute deviation (SCAD) penalty \citep{fan2001variable} and the minimax concave penalty (MCP)\citep{zhang2010nearly}. \cite{ma2017concave} compared different penalty functions in terms of clustering performance under the linear regression setups and found that SCAD and MCP perform similarly, and $L_1$ tended to give more clusters.  In our work, we use MCP. Note that the algorithm can be easily adapted to the SCAD penalty and the theoretical properties also hold for the SCAD penalty.} In particular, the MCP has the following form
\begin{align}
\mathcal{P}_{\gamma}(t,\lambda)=\begin{cases}
\lambda\vert t\vert-\frac{t^{2}}{2\gamma}, & \vert t\vert\leq\gamma\lambda,\\
\frac{1}{2}\gamma\lambda^{2}, & \vert t\vert>\gamma\lambda.
\end{cases}
\end{align}
$\gamma$ is a built-in parameter, which is fixed at 3 as in the literature \citep{ma2020exploration}, $\lambda$ is a tuning parameter, which will be selected based on data driven criteria. 

If $\vert \eta_t - \eta_{t-1}\vert$ is shrunk to zero, no change point is identified at time $t$. If  $\vert \eta_t - \eta_{t-1}\vert$ is not shrunk to zero, then a change point at time $t$ is detected. Let $\bm{\xi}= (\xi_2,\dots, \xi_T)^{\top} = \left(\eta_{2} -\eta_1,\eta_{3}-\eta_{2},\dots,\eta_{T}-\eta_{T-1}\right)^{\top}$, then, $(\eta_2,\dots, \eta_T)^\top  = \bm{M}\bm{\xi}$, where $\bm{M}$ is a $(T-1)\times (T-1)$ lower triangular matrix with nonzero elements equal to one. 
Thus \eqref{Eq:pen1} can be written as, 
\begin{equation}
\label{Eq:pen1new}
    \sum_{t=2}^{T}\mathcal{P}_{\gamma_{1}}\left(\vert\eta_{t}-\eta_{t-1}\vert,\lambda_{1}\right) = \sum_{t=2}^{T} \mathcal{P}_{\gamma_{1}}\left(\vert\xi_t \vert,\lambda_{1}\right). 
\end{equation}
\eqref{Eq:pen1new} implies that detecting zero differences between $\eta_t$ and $\eta_{t-1}$ is equivalent to detecting zero values of $\xi_t$.

To find the spatial cluster pattern, we will consider a graph based fusion penalty. Let $\Gc$ be an undirected connected graph $\Gc = (\Vc,\Ec_0),$ where $\Vc = \{v_1,\cdots,v_N\}$ is the set of vertices with $v_i$ representing location $i,$ and $\Ec_0 = \{(v_i,v_{i^{\prime}}):v_i\neq v_{i^{\prime}}\}$ is the edge set. In areal data, we can construct this graph based on neighbor structure: if area $i$ and area $i^\prime$ share a boundary, then $(v_i,v_{i^\prime}) \in \Ec_0$. In geostatistical data, we can construct this graph based on Delaunay triangulation \citep{lee1980two}. A spanning tree $\Tc$ of the graph $\Gc$ is a connected undirected subgraph of $\Gc$ with no cycles and includes all the vertices of $\Gc$. A special spanning tree is a minimum spanning tree (MST). Denote $d(v_i, v_{i^\prime})$ as the associated weight to each edge $(v_i, v_{i^\prime})$ in $\mathcal{E}_0$, then a MST is defined as $\mathcal{T} = (\mathcal{V}, \mathcal{E})$ such that this subgraph is a spanning tree and $\sum_{(v_i, v_{i^{\prime}})\in \mathcal{E}}d(v_i, v_{i^{\prime}})$ is minimized \citep{li2019spatial}.
The construction of MST relies on the edge weights for all edges in $\Ec$. If there is some prior knowledge of the weights, the weights can be constructed based on local estimates as used in \cite{zhang2019distributed}, \cite{wangZhangZhu2023} and \cite{wang2024scanner}. If there are no local estimates, one can use distance to define weights for geographical data as used in \cite{li2019spatial}. If the weights are not properly defined,  MST will be constructed based on equal weights.  The tree based penalty for a given an MST, $\mathcal{T}$, is defined as
\begin{equation}
\label{Eq:pen2}
    \sum_{(i,i^{\prime})\in\mathcal{E}}\mathcal{P}_{\gamma_{2}}\left(\Vert\bm{\beta}_{i}-\bm{\beta}_{i^{\prime}}\Vert,\lambda_{2}\right),
\end{equation}
where {\bc $\Vert \cdot \Vert$ is the Euclidean norm}$, \mathcal{P}_{\gamma_2}(\cdot,\lambda_2)$ is the MCP with $\gamma_2 = 3$ and $\lambda_2$ is a tuning parameter that will be selected later.

Based on the property of MST, we can establish an incident matrix $\bm{H}$ for $\mathcal{T}$, which is an $(N-1)\times N$ full rank matrix. And the $(l,i)$th entry in $\bm{H}$ is defined as: $\bm{H}_{l,i}=\begin{cases}
1, & \text{if }i=s(l),\\
-1, & \text{if }i=e(l),\\
0, & \text{otherwise},
\end{cases}$ where $s(l)$ and $e(l)$ denote the starting and ending node indices of edge $l$ in $\mathcal{T}$, respectively, with $s(l) < e(l)$.  Then, $
    \left(\bm{\beta}_{i}^{\top}-\bm{\beta}_{i^{\prime}}^{\top},\left(i,i^{\prime}\right)\in\mathcal{T}\right)=\left(\bm{H}\otimes\bm{I}_{p}\right)\bm{\beta}$,  {\bc where $\otimes$ represents the Kronecker product}. Let $\bm{\tilde{H}}=\left(\begin{array}{c}
\frac{1}{\sqrt{N}}\bm{1}^{\top}\\
\bm{H}
\end{array}\right)$, which is an $N\times N$ full rank matrix.  Then, we can define $\bm{\varsigma}= (\vvarsigma_1^{\top},\vvarsigma_2^{\top},\dots, \vvarsigma_N^{\top})^{\top} = (\tilde{\bm{H}}\otimes \bm{I}_p)\bm{\beta}$. And $\vbeta = (\tilde{\bm{H}}\otimes \bm{I}_p)^{-1}\vvarsigma = (\tilde{\bm{H}}^{-1}\otimes \bm{I}_p)\vvarsigma$. Thus, \eqref{Eq:pen2} can be written as
\begin{equation}
\label{Eq:pen2new}
    \sum_{(i,i^{\prime})\in\mathcal{E}}\mathcal{P}_{\gamma_{2}}\left(\Vert\bm{\beta}_{i}-\bm{\beta}_{i^{\prime}}\Vert,\lambda_{2}\right) = \sum_{i=2}^{N}\mathcal{P}_{\gamma_{2}}\left(\Vert\bm{\vvarsigma}_i\Vert,\lambda_{2}\right).
\end{equation}
And \eqref{Eq:pen2new} implies that detecting zero differences between $\bm{\beta}_i$ and $\bm{\beta}_{i^\prime}$ is equivalent to detecting zero values of $\bm{\vvarsigma}_i$.

To achieve the goal of identifying change points and finding cluster patterns simultaneously, we consider minimizing the following objective function, which combines the likelihood function in \eqref{Eq:nll}, the penalty functions for identifying change points in \eqref{Eq:pen1} and the penalty functions for finding clusters in \eqref{Eq:pen2}, 
\begin{align}
\label{Eq:obj}
L(\bm{\alpha},\bm{\eta},\bm{\beta}) =  & \frac{1}{NT}\sum_{i=1}^{N}\sum_{t=1}^{T}\left[-y_{it}\left( \bm{z}_{it}^{\top}\bm{\alpha} + \bm{x}_{it}^{\top}\bm{\beta}_{i}+\eta_{t}\right)+n_{it}\exp\left(\bm{z}_{it}^{\top}\bm{\alpha} + \bm{x}_{it}^{\top}\bm{\beta}_{i}+\eta_{t}\right)\right]\nonumber\\
+ & \sum_{t=2}^{T}\mathcal{P}_{\gamma_{1}}\left(\vert\eta_{t}-\eta_{t-1}\vert,\lambda_{1}\right)+\sum_{(i,i^{\prime})\in\mathcal{E}}\mathcal{P}_{\gamma_{2}}\left(\Vert\bm{\beta}_{i}-\bm{\beta}_{i^{\prime}}\Vert,\lambda_{2}\right).
\end{align}

Let $\bm{\theta}=(\bm{\alpha}^\top, \vxi^{\top},\vvarsigma^{\top})^{\top}$, which is a $(q+T-1+Np)\times 1$ vector. Based on \eqref{Eq:pen1new} and \eqref{Eq:pen2new}, the objective function \eqref{Eq:obj} can be expressed in terms of $\bm{\theta}$ as follows:
\begin{align}
\label{Eq:obj_new}
Q(\bm{\theta}) =  & \frac{1}{NT}\sum_{i=1}^{N} \sum_{t=1}^{T}\left[-y_{it}\left(\bm{z}_{it}^{\top}\bm{\alpha} + \bm{x}_{it}^{\top}\bm{h}_{i}\vvarsigma+\bm{m}_t^{\top}\vxi\right)+n_{it}\exp\left(\bm{z}_{it}^{\top}\bm{\alpha} + \bm{x}_{it}^{\top}\bm{h}_{i}\vvarsigma+\bm{m}_t^{\top}\vxi\right)\right]\nonumber\\
+ & 
\sum_{t=2}^{T}\mathcal{P}_{\gamma_{1}}\left(\vert\xi_t\vert,\lambda_{1}\right)+\sum_{i=2}^{N}\mathcal{P}_{\gamma_{2}}\left(\Vert\bm{\vvarsigma}_i\Vert,\lambda_{2}\right),
\end{align}
where $\bm{m}_t^{\top}$ is the $(t-1)$th row of $\bm{M}$ for $t=2,\dots, T$, $\bm{m}_1 = \bm{0}_{T-1}$, a $(T-1)\times 1 $ zero vector, and $\bm{h}_i = \tilde{\bm{h}}_i^{\top} \otimes \bm{I}_p$ with $\tilde{\bm{h}}_i^{\top}$ is the $i$th row of $\tilde{\bm{H}}^{-1}$.

\subsection{Computation algorithm}
\label{subsec:algorithm}

Let $\hat{\bm{\theta}}=(\hat{\bm{\alpha}}^\top, \hat{\vxi}^{\top},\hat{\vvarsigma}^{\top})^{\top}$ be the solution to the following minimization problem for a given tree $\Tc$ and given $\lambda_1$ and $\lambda_2$ based on the objective function in \eqref{Eq:obj_new},
\begin{equation}
    \label{Eq:sol}
    \hat{\bm{\theta}} = \arg \min_{\bm{\theta}}Q(\bm{\theta};\lambda_1,\lambda_2).
\end{equation}

To solve the minimization problem, we develop an algorithm based on the general iterative shrinkage and thresholding algorithm (GIST)\citep{gong2013general}. The details are outlined below.

Denote $l(\bm{\theta}) = \frac{1}{NT}\sum_{i=1}^{N} \sum_{t=1}^{T}\left[-y_{it}\left(\bm{z}_{it}^{\top}\bm{\alpha} + \bm{x}_{it}^{\top}\bm{h}_{i}\vvarsigma+\bm{m}_t^{\top}{\vxi}\right)+n_{it}\exp\left(\bm{z}_{it}^{\top}\bm{\alpha} + \bm{x}_{it}^{\top}\bm{h}_{i}\vvarsigma+\bm{m}_t^{\top}{\vxi}\right)\right]$. Given the current values of parameters $\bm{\theta}^{(r)}$ at the $r$th step, then the $(r+1)$th update $\bm{\theta}^{(r+1)}$ is given by,
{\bc 
\begin{align*}
\bm{\theta}^{(r+1)}= & \underset{\bm{\theta}}{\text{argmin}}\ l\left(\bm{\theta}^{(r)}\right)+\left\langle \nabla l\left(\bm{\theta}^{\left(r\right)}\right),\bm{\theta}-\bm{\theta}^{\left(r\right)}\right\rangle +\frac{\rho^{\left(r\right)}}{2}\Vert\bm{\theta}-\bm{\theta}^{(r)}\Vert^{2}\\
 & +\sum_{t=2}^{T}\mathcal{P}_{\gamma_{1}}\left(\vert\xi_{t}\vert,\lambda_{1}\right)+\sum_{i=2}^{N}\mathcal{P}_{\gamma_{2}}\left(\Vert\bm{\vvarsigma}_{i}\Vert,\lambda_{2}\right),
\end{align*}}
where $\nabla l\left(\bm{\theta}^{\left(r\right)}\right)$ is the first order derivative of $l(\cdot)$, $ \left\langle \cdot, \cdot \right \rangle$ is the inner product and $1/\rho^{(r)}$ is the step size. The step size is determined by the line search criterion used in \cite{gong2013general}. The detail is provided in Remark \ref{rem1}. Following the GIST algorithm, the problem is equivalent to the following proximal operator problem:
\begin{equation}
\label{Eq:sol_update}
   \bm{\theta}^{(r+1)}=\underset{\bm{\theta}}{\text{argmin}}\frac{1}{2}\Vert\bm{\theta}-\bm{u}^{(r)}\Vert^{2}+\frac{1}{\rho^{(r)}}\sum_{t=2}^{T}\mathcal{P}_{\gamma_{1}}\left(\vert\xi_{t}\vert,\lambda_{1}\right)+\frac{1}{\rho^{(r)}}\sum_{i=2}^{N}\mathcal{P}_{\gamma_{2}}\left(\Vert\bm{\vvarsigma}_{i}\Vert,\lambda_{2}\right),
\end{equation}
where $\bm{u}^{(r)}=\bm{\theta}^{(r)}-\nabla l\left(\bm{\theta}^{(r)}\right)/\rho^{(r)}$, and {\bc $\xi_t$ and $\bm{\varsigma}_i$ are part of $\bm{\theta}$, which need to be updated}.  Since there are no penalties applied to $\bm{\alpha}$ and $\bm{\varsigma}_1$, $\bm{\alpha}^{(r+1)}$ and $\bm{\varsigma}_1^{(r+1)}$ are updated based on the values in $\bm{u}^r$ directly. In particular, $\bm{\alpha}^{(r+1)} = \bm{u}^{(r)}_{[1:q]}$, where $\bm{u}^{(r)}_{[1:q]}$ is first $q$ values in $\bm{u}^{(r)}$ correspond to $\bm{\alpha}$. $\bm{\varsigma}_{1}^{(r+1)}=\bm{u}_{[(q+T):(q+T+p-1)]}^{(r)}$
, where $\bm{u}_{[(q+T):(q+T+p-1)]}^{(r)}$ is the $q+T$ to $q+T+p-1$ elements,
corresponding to $\bm{\varsigma}_{1}$ in $\bm{\theta}$.

To update $\xi_{t}^{(r+1)}$, $t=2,\dots T$, it is equivalent to the following minimization problem, 
\[
\xi_{t}^{(r+1)}=\underset{\xi_{t}}{\text{argmin}}\frac{\rho^{(r)}}{2}\left(\xi_{t}-u_{[q+t-1]}^{(r)}\right)^{2}+\mathcal{P}_{\gamma_{1}}\left(\vert\xi_{t}\vert,\lambda_{1}\right),
\]
where $u_{[q+t-1]}^{(r)}$ the $(q+t-1)$the element in $\bm{u}^{(r)}$. To update $\bm{\varsigma}_{i}^{(r+1)}$, $i=2,\dots N$, it is equivalent to the following minimization problem,
\[
\bm{\varsigma}_{i}^{(r+1)}=\underset{\bm{\varsigma}_{i}}{\text{argmin}}\frac{\rho^{(r)}}{2}\Vert\bm{\varsigma}_{i}-\bm{u}_{\bm{\varsigma}_{i}}^{(r)}\Vert^{2}+\mathcal{P}_{\gamma_{2}}\left(\Vert\bm{\vvarsigma}_{i}\Vert,\lambda_{2}\right),
\]
where $\bm{u}_{\bm{\varsigma}_{i}}^{(r)}$ is the value in $\bm{u}^{(r)}$ corresponding to $\bm{\varsigma}_i$.
For MCP \citep{zhang2010nearly}, the solutions for $\xi_{t}$ and $\bm{\varsigma}_i$ are
\begin{equation}
\label{Eq:sol_xi}
\xi_{t}^{(r+1)}=\begin{cases}
\frac{S\left(u_{[q+t-1]}^{(r)},\lambda_{1}/\rho^{(r)}\right)}{1-1/\left(\gamma_{1}\rho^{(r)}\right)} & \text{if }\vert u_{[q+t-1]}^{(r)}\vert\leq\gamma_1\lambda_1,\\
u_{[q+t-1]}^{(r)} & \text{if }\vert u_{[q+t-1]}^{(r)}\vert>\gamma_1\lambda_1,
\end{cases}
\end{equation}
and 
\begin{equation}
\label{Eq:sol_varsigma}
\bm{\varsigma}_{i}^{(r+1)}=\begin{cases}
\frac{S\left(\bm{u}_{\bm{\varsigma}_{i}}^{(r)},\lambda_{2}/\rho^{(r)}\right)}{1-1/\left(\gamma_{2}\rho^{(r)}\right)} & \text{if }\Vert\bm{u}_{\bm{\varsigma}_{i}}^{(r)}\Vert\leq\gamma_2\lambda_2,\\
\bm{u}_{\bm{\varsigma}_{i}}^{(r)} & \text{if }\Vert\bm{u}_{\bm{\varsigma}_{i}}^{(r)}\Vert>\gamma_2\lambda_2,
\end{cases}
\end{equation}
where $S(\bm{x},\lambda) = (1-\lambda/\Vert \bm{x}\Vert)_+ \bm{x}$, and $(x)_+=x$ if $x>0$, 0 otherwise.

The estimates $\hat{\bm{\varsigma}}$  highly depend on the pre-selected tree $\Tc$, which has been discussed in \cite{li2019spatial}, \cite{zhang2019distributed} and \cite{LIN2022105023}. Figure \ref{fig:mst_examples} illustrates two examples of spanning trees. In these two figures, there are two clusters within 25 nodes (locations). In Figure \ref{fig:mst1}, with a properly constructed tree, there is one edge between two clusters. Then, the graph can be partitioned into two subgraphs by removing the edge between them (the red cross in the figure), representing two clusters. However, in Figure \ref{fig:mst2}, if the tree is not properly constructed, there are two edges between these two clusters. By removing these two edges (red crosses in the figure), the graph is partitioned into three subgraphs, leading to three clusters. 

\begin{figure}[H]
\centering
\begin{subfigure}{.45\textwidth}
  \centering
  \includegraphics[width=0.95\linewidth]{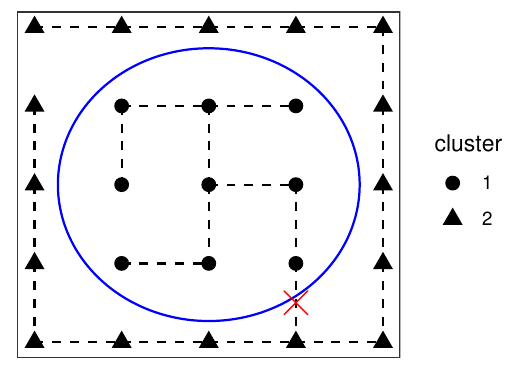}
  \caption{An example of a spanning tree with one edge between two clusters. }
  \label{fig:mst1}
\end{subfigure}%
\hspace{.1in}
\begin{subfigure}{.45\textwidth}
  \centering
  \includegraphics[width=0.95\linewidth]{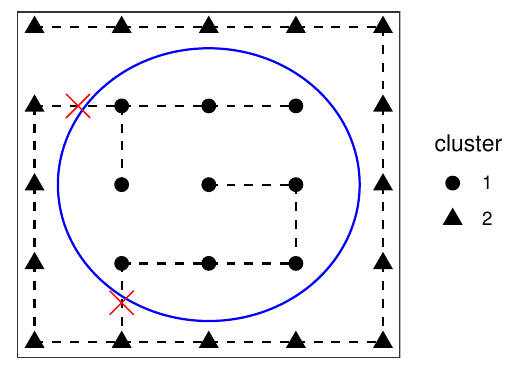}
  \caption{An example of a spanning tree with two edges between two clusters.}
  \label{fig:mst2}
\end{subfigure}
\caption{Two spanning trees with 25 nodes. Different shapes represent two different clusters. Dashed lines represent edges connecting two nodes. }
\label{fig:mst_examples}
\end{figure}
Here, we propose an adaptive approach to improve the constructed MST, which is not studied in the literature. Consider an initial MST based on neighbors or locations, $\mathcal{T} = (\mathcal{V}, \mathcal{E})$.  $\hat{\bm{\theta}}_{\mathcal{T}}$ is the estimator obtained from minimizing \eqref{Eq:obj} based on the given tree $\mathcal{T}$. According to the relationship between $\vbeta$ and $\vvarsigma$, we can obtain the estimator $\hat{\vbeta}_{\mathcal{T}} = \left(\tilde{\bm{H}}^{-1}\otimes \bm{I}_p\right)\hat{\vvarsigma}_{\mathcal{T}}$.
For any two locations in the original graph $\Gc$, we define the following weights based on estimates $\hat{\bm{\beta}}_{\mathcal{T}}$,
\begin{equation}
\label{Eq:weight}
 w_{ii^\prime} = \begin{cases}
\Vert\hat{\vbeta}_{\mathcal{T},i}-\hat{\vbeta}_{\mathcal{T},i^\prime}\Vert, & \text{if }(i,i^\prime)\in\mathcal{E}_0,\\
\infty, & \text{otherwise.}
\end{cases}   
\end{equation}
Then, a new MST $\Tc^*  = (\mathcal{V},\mathcal{E}^*)$ is constructed using the updated weights in \eqref{Eq:weight}. In the end, the final estimator $\hat{\bm{\theta}}^* = (\hat{\bm{\alpha}}^*,\hat{\vxi}^*,\hat{\vvarsigma}^*)$ is obtained by solving a minimization problem similar to \eqref{Eq:sol} based on the new MST, $\mathcal{T}^*$. In particular, the penalized likelihood function has the following form
\begin{align}
\label{Eq:obj_final}
L(\bm{\alpha},\bm{\eta},\bm{\beta}) =  & \frac{1}{NT}\sum_{i=1}^{N}\sum_{t=1}^{T}\left[-y_{it}\left( \bm{z}_{it}^{\top}\bm{\alpha} + \bm{x}_{it}^{\top}\bm{\beta}_{i}+\eta_{t}\right)+n_{it}\exp\left(\bm{z}_{it}^{\top}\bm{\alpha} + \bm{x}_{it}^{\top}\bm{\beta}_{i}+\eta_{t}\right)\right]\nonumber\\
+ & \sum_{t=2}^{T}\mathcal{P}_{\gamma_{1}}\left(\vert\eta_{t}-\eta_{t-1}\vert,\lambda_{1}\right)+\sum_{(i,i^{\prime})\in\mathcal{E}^*}\mathcal{P}_{\gamma_{2}}\left(\Vert\bm{\beta}_{i}-\bm{\beta}_{i^{\prime}}\Vert,\lambda_{2}\right),
\end{align}
where the fusion type penalty is applied to the edges in $\mathcal{E}^*$. {\bc Note that, compared to \eqref{Eq:obj}, the difference is that the penalty function is imposed on different edges, which are defined based on different trees.}
Based on the same arguments, the optimization problem is equivalent to minimizing the following objective function,
\begin{align}
\label{Eq:obj_new_final}
Q^*(\bm{\theta}) =  & \frac{1}{NT}\sum_{i=1}^{N} \sum_{t=1}^{T}\left[-y_{it}\left(\bm{z}_{it}^{\top}\bm{\alpha} + \bm{x}_{it}^{\top}\bm{h}_{i}^*\vvarsigma+\bm{m}_t^{\top}{\vxi}\right)+n_{it}\exp\left(\bm{z}_{it}^{\top}\bm{\alpha} + \bm{x}_{it}^{\top}\bm{h}_{i}^*\vvarsigma+\bm{m}_t^{\top}{\vxi}\right)\right]\nonumber\\
+ & 
\sum_{t=2}^{T}\mathcal{P}_{\gamma_{1}}\left(\vert\xi_t\vert,\lambda_{1}\right)+\sum_{i=2}^{N}\mathcal{P}_{\gamma_{2}}\left(\Vert\bm{\vvarsigma}_i\Vert,\lambda_{2}\right),
\end{align}
where $\bm{h}^*_i = \tilde{\bm{h}}^{*\top}_i \otimes \bm{I}_p$, and $\tilde{\bm{h}}^{*\top}_{i}$ is the $i$th row of $(\tilde{\bm{H}}^*)^{-1}$. Here, $\bm{\tilde{H}}^*=\left(\begin{array}{c}
\frac{1}{\sqrt{N}}\bm{1}^{\top}\\
\bm{H}^*
\end{array}\right)$, and $\bm{H}^*$ is the incident matrix corresponding to $\mathcal{T}^*$. With the estimator $(\hat{\bm{\alpha}}^*,\hat{\vxi}^*, \hat{\vvarsigma}^*) = \arg 
\min_{\bm{\theta}} Q^*(\bm{\theta})$, the estimated time effect and regression coefficients are $\hat{\veta}^* = \bm{M}\hat{\vxi}^*$ and $\hat{\vbeta}^* = \left(\tilde{\bm{H}}^* \otimes \bm{I}_p\right)^{-1}\hat{\vvarsigma}^* = \left(\tilde{\bm{H}}^{*-1} \otimes \bm{I}_p\right)\hat{\vvarsigma}^*$.

 In summary, the computational algorithm for  a given $\lambda_1$ and a given $\lambda_2$ is outlined as follows:
\begin{algorithm}[H]
	\caption{ Adaptive algorithm}
 \label{alg:ada}
	\begin{algorithmic}[1]
	    \REQUIRE: Initialize $\mathcal{T}$ and $\bm{\theta}^{(0)}$. 
	    \FOR {$r=0,1,2,\dots$}
		\STATE Initialize $\rho^{(r)} = 1$.
		      \REPEAT 
		      		   \STATE Update $\bm{\theta} ^{(r+1)}$ based on \eqref{Eq:sol_update}, \eqref{Eq:sol_xi} and \eqref{Eq:sol_varsigma}.
		      		   \STATE $\rho^{(r+1)} = 2\rho^{(r)}$.
		      \UNTIL some line search criterion is satisfied.
		\IF{convergence criterion is met} 
		\STATE{Stop and get the estimates $\hat{\bm{\theta}}_{\mathcal{T}}$.}
			\ELSE
		\STATE { $r=r+1$}
		\ENDIF
		\ENDFOR
        \STATE Obtain the updated MST, $\mathcal{T}^*$, based on $\hat{\bm{\theta}}_{\mathcal{T}}$ .
        \STATE Repeat steps 1 - 12 to obtain the final estimates based on $\mathcal{T}^*$.
	\end{algorithmic}
\end{algorithm}

\begin{rem}
\label{rem1}
\cite{gong2013general} discussed the linear search criterion in detail. Here we follow their procedure, that is $Q(\bm{\theta}^{(r+1)}) \leq Q(\bm{\theta}^{(r)}) - \frac{\sigma}{2} \rho^{(r)}\Vert \bm{\theta}^{(r+1)}  - \bm{\theta}^{(r)}\Vert $, where $\sigma = 10^{-5}$ as used in \cite{gong2013general}.  And the convergence criterion we use here is $\max \{ \vert\theta_l^{(r+1)} - \theta_l^{(r)} \vert \} < 10^{-4}$. 
\end{rem}

{\bc
\begin{rem}
    In each iteration of Algorithm 1, it conducts the updates of $\bm{\theta}$ based on  \eqref{Eq:sol_update}-\eqref{Eq:sol_varsigma}. Note that the updating in \eqref{Eq:sol_update}-\eqref{Eq:sol_varsigma} is element-wise and the dimension of $\bm{\theta}$ is $(q+T-1+Np)$. Thus, the per-iteration computation complexity is $O(T+N)$. Following Theorem 2 in \cite{gong2013general}, it would requires $O(1/\sqrt{\epsilon})$ iterations to meet the convergence criterion $\max\{|\theta^{r+1}_l - \theta^{r}_l|\}\le \epsilon$ ($\epsilon = 10^{-4}$ in our case). Thus, the total computation complexity for the iterative shrinkage and thresholding algorithm is $O((T+N)/\sqrt{\epsilon})$
\end{rem}
}

\begin{rem}
The initial values can be obtained by fitting a Poisson regression model with location-specific effects $\bm{\beta}_i$ and individual time effect $\eta_t$. 
\end{rem}

Recall that by changing the values of $\lambda_1$ and $\lambda_2$, we can obtain the estimates of the number of changed points $\hat{J}$ and the number of clusters $\hat{K}$. Following the literature \cite{ma2020exploration}, $\lambda_1$ and $\lambda_2$ are selected based on minimizing the following modified BIC,
\begin{equation}
    \label{eq:bic1}
    BIC(\lambda_1,\lambda_2) = 2l_0(\hat{\bm{\alpha}},\hat{\bm{\eta}},\hat{\bm{\beta}}) + C_N\log(NT) (\hat{K}p + \hat{J}),
\end{equation}
where $l(\cdot)$ is the negative loglikelihood in \eqref{Eq:nll}, $\hat{J}$ is the estimated number of changed points, $\hat{K}$ is the estimated number of clusters, $p$ is the dimension of $\bm{x}$, and $C_N  = \log(Np + T-1)$.

A two-step procedure is implemented to select these two tuning parameters {\bc similar to \cite{tang2023multivariate}}. First, we set $\lambda_2 = 0$ and select $\lambda_1$ for a given tree $\mathcal{T}$ based on BIC in \eqref{eq:bic1}. Specifically, $\lambda_1$ will be chosen from a grid of values of $\lambda_1$. Second, given the selected $\lambda_1$, for each $\lambda_2$, we implement the algorithm described in Algorithm \ref{alg:ada}. Then, $\lambda_2$ will be selected based on the BIC in \eqref{eq:bic1}. Thus, we can have the final estimates.

\section{Theoretical properties}
\label{Sec:theory}

In this section, we study the theoretical properties of our proposed estimator. We will study the properties in two steps. In the first step, we will present the theoretical properties for a given tree $\mathcal{T}$. In the second step, we will provide the properties for the estimator based on the adaptive tree $\mathcal{T}^*$.

First, we will introduce some notations. Let $\bm{\alpha}^{0},\bm{\eta}^{0},\bm{\beta}^{0}$ be the true parameters
vectors, where $\bm{\beta}^{0}=\left(\bm{\beta}_{1}^{0},\dots,\bm{\beta}_{N}^{0}\right)$,
and the true cluster structure is $\mathcal{C}^{0}=\left\{ \mathcal{C}_{k}^{0}\right\} _{k=1}^{K}$, where $K$ is the number of clusters.
For a given MST, $\mathcal{T}=\left(\mathbb{\mathcal{V}},\mathbb{\mathcal{E}}\right)$,
define $\mathbb{C}_{\mathcal{T}}=\left\{ \left(i,i^{\prime}\right)\in\mathcal{E}:\bm{\beta}_{i}^{0}\neq\bm{\beta}_{i^{\prime}}^{0}\right\} $, which corresponds to the edges in $\mathcal{T}$ and the nodes (locations) are not in the same cluster. {\bc For example, $\mathbb{C}_{\mathcal{T}}$ defined based on Figure \ref{fig:mst1} will contains one edge, and $\mathbb{C}_{\mathcal{T}}$ defined based on Figure \ref{fig:mst2} will contains two edges.} Then by removing $\vert\mathbb{C}_{\mathcal{T}}\vert$ number of edges,
we can obtain $K_{\mathcal{T}} \equiv \vert\mathbb{C}_{\mathcal{T}}\vert+1$ number of clusters, and the corresponding cluster structure is denoted as $\mathcal{C}_{\mathcal{T}}$. Note that, $K_{\mathcal{T}}$ may not be the same as $K$ as discussed in Figure \ref{fig:mst_examples}. Let $\bm{\theta}_{\mathcal{T}} = (\bm{\alpha}^\top,\bm{\xi}^\top,\bm{\varsigma}^\top)^\top$ be the transformed parameter vector depending on $\mathcal{T}$. Recall that $\bm{\varsigma}_l$ represents $\bm{\beta}_i - \bm{\beta}_i^\prime$ for $(i,i^\prime) \in \mathcal{E}$, where $ \mathcal{E}$ is the edge set for the tree $\mathcal{T}$. If the nonzero $\bm{\varsigma}_l$ can be identified, then the corresponding cluster structure $\mathcal{C}_{\mathcal{T}}$ can then be identified. As discussed above,  if MST is properly constructed, such that by removing $\vert \mathbb{C}_{\mathcal{T}}\vert$ edges, we can recover the true cluster structure $\mathcal{C}^0$. Because of the one-to-one mapping relationship between $\bm{\varsigma}$ and $\bm{\beta}$, $\bm{\xi}$ and $\bm{\eta}$, thus estimating $\bm{\varsigma}$ and $\bm{\xi}$ is equivalent to estimating $\bm{\beta}$ and $\bm{\eta}$. Thus, we will focus on the properties of $\hat{\bm{\theta}}$, the estimator of $\bm{\theta}=(\bm{\alpha}^\top, \bm{\xi}^\top, \bm{\varsigma}^\top)^\top$.

Let $\bm{\theta}_{\mathcal{T},0}$ be the transformed true parameters vector based on $\bm{\alpha}^0, \bm{\eta}^0, \bm{\beta}^0$ for a given tree $\mathcal{T}$. Let $\mathcal{S}_N = \{i\in \{2,3,\dots, N\}; \Vert \bm{\varsigma}_i \Vert \neq 0\}$ and $\mathcal{N}_N = \{i\in \{2,3,\dots, N\}, \Vert \bm{\varsigma}_i \Vert = 0\}$, which represent the nonzero and zero sets of $\bm{\varsigma}_i$, for $i=2,3\dots, N$ and have penalties applied in the objective function. Denote $\bm{\varsigma}_{(1)}$ and $\bm{\varsigma}_{(2)}$ as the parameter vectors for $\mathcal{S}_N$ and $\mathcal{N}_N$, respectively. Let $\mathcal{S}_{T}$ and $\mathcal{N}_T$ be the nonzero and zero sets of $\xi_t$. And denote $\bm{\xi}_{(1)}$ and $\bm{\xi}_{(2)}$ as the parameter vectors for $\mathcal{S}_T$ and $\mathcal{N}_T$, respectively. Furthermore, denote $\bm{\theta}_{\mathcal{T},0} = (\bm{\theta}_{\mathcal{T},1,0}^\top, \bm{\theta}_{\mathcal{T},2,0}^\top)^\top$, where  $\bm{\theta}_{\mathcal{T},1,0}=\left(\bm{\alpha}^\top,\bm{\varsigma}_{1}^\top,\bm{\varsigma}_{\left(1\right)}^\top,\bm{\xi}_{\left(1\right)}^\top\right)^\top$ is the true nonzero parameters vector and $\bm{\theta}_{\mathcal{T},2,0}=\bm{0}$ is the zero parameters vector. $d=2^{-1}\min\left\{ \Vert\bm{\varsigma}_{i}\Vert,i\in\mathcal{S}_{N},\vert\xi_{t}\vert,t\in\mathcal{S}_{T}\right\} $,
represents the signal.

Let $\tilde{\bm{X}}=\text{diag}(\bm{x}_{1},\bm{x}_{2,},\dots,\bm{x}_{N})$,
where $\bm{x}_{i}=\left(\bm{x}_{i1},\dots x_{iT}\right)^{\top}$, $\mathbb{X}=\tilde{\bm{X}}\left(\tilde{\bm{H}}^{-1}\otimes\bm{I}_{p}\right)$ and $\mathbb{M}=\bm{1}_{N} \otimes \bm{M}$. Let $\mathbb{U}=\left(\bm{Z},\mathbb{X},\mathbb{M}\right) = \left(\mathbb{U}_1,\mathbb{U}_2\right)$, where $\mathbb{U}_1$ is the design matrix corresponding to $\bm{\theta}_{\mathcal{T},1,0}$, and $\mathbb{U}_{2}$ is the design matrix corresponding $\bm{\theta}_{\mathcal{T},2,0}$. {\bc  In particular,  $\mathbb{U}_1$ is the submatrix of $\mathbb{U}$ corresponding to $\bm{\alpha}$, and nonzero $\bm{\varsigma}_i$'s, and nonzero $\xi_t$'s. Then, the negative loglikelihood can be written as a matrix form for a Poisson regression model,
$$
 l\left(\bm{\theta}\right)=\frac{1}{NT}\left(-\bm{y}^{\top}\left(\bm{Z}\bm{\alpha}+\mathbb{X}\bm{\varsigma}+\mathbb{M}\bm{\xi}\right)+\bm{1}^{\top}\bm{\mu}\left(\bm{\theta}\right)\right)=\frac{1}{NT}\left(-\bm{y}^{\top}\mathbb{U}\bm{\theta}+\bm{1}^{\top}\bm{\mu}\left(\bm{\theta}\right)\right),
 $$
 where $\bm{\mu}(\bm{\theta})$ is the vector of expected values of $y_{it}$ for $i=1,\dots,N$ and $t=1,\dots, T$ evaluated at $\bm{\theta}$, which is $n_{it}\exp\left(\bm{z}_{it}^{\top}\bm{\alpha} + \bm{x}_{it}^{\top}\bm{h}_{i}\vvarsigma+\bm{m}_t^{\top}\vxi\right)$ from the definition of a Poisson random variable. Then, the optimization problem in \eqref{Eq:obj_new} can be considered as a penalized regression problem. 
 When we know the sparsity of $\bm{\theta} = (\bm{\theta}_1^\top, \bm{\theta}_2^\top)^\top$, that is, if we know that $\bm{\theta}_{2} = \bm{0}$, then we can have the oracle estimator \citep{fan2001variable, fan2011nonconcave} of $\bm{\theta}_{1}$ by minimizing 
 $$
 l^{or}\left(\bm{\theta}_1\right)=\frac{1}{NT}\left(-\bm{y}^{\top}\mathbb{U}_1\bm{\theta}_1+\bm{1}^{\top}\bm{\mu}\left(\bm{\theta}_{1}\right)\right),
 $$
 which becomes a traditional Poisson regression problem. In \cite{fan2011nonconcave}, they discussed the conditions for penalized regression problems for generalized linear regression models in detail. Our model can be considered under their model structure after transformations. The existence of the oracle estimator is an important part in these problems, thus, the design matrix $\mathbb{U}_1$ should have conditions to guarantee the existence of the oracle estimator.  We followed the study in \cite{fan2011nonconcave} and adjusted their conditions to our model. 
}

Furthermore, for any positive numbers, $x_T$ and $y_T$ , $x_T \gg y_T$ means that $x_T^{-1}y_T = o(1)$. And for a vector $\bm{z}$, $\Vert \bm{z}\Vert_\infty = \sup_j \vert z_j\vert$.

Below are the assumptions.

\begin{enumerate}[label={(C\arabic*)}]
{\bc
\item The design matrix $\mathbb{U}$ satisfies that 
$\lambda_{\min}\left[\frac{1}{NT}\mathbb{U}_{1}^{\top}\mathbb{U}_{1}\right]\geq c_{1}$
 for some positive constant $c_1$, where $\lambda_{\min}(\cdot)$ is the corresponding minimum  eigenvalue. In addition, $\Vert \bm{x}_{it}\Vert_{\max}$ and $\Vert\bm{z}_{it}\Vert_{\max}$ are bounded, where $\Vert \cdot \Vert_{\max}$ denotes the largest element. 

\item $\bm{\mu}\left(\bm{\delta}\right)$
is bounded by some constants $M_{1}$ and $M_{2}$ with $M_{1}<M_{2}$ for $\bm{\delta}\in\mathcal{N}_{0}$, where $\bm{\mu}(\bm{\delta})$ is the vector of expected values of $y_{it}$ for $i=1,\dots,N$ and $t=1,\dots, T$, $\mathcal{N}_{0}=\left\{ \bm{\delta}\in\mathbb{R}^{s}:\Vert\bm{\delta}-\bm{\theta}_{\mathcal{T},1,0}\Vert_{\infty}\leq d\right\}$ and $s$ is the dimension of $\bm{\theta}_{\mathcal{T},1,0}$, and $\bm{\theta}_{\mathcal{T},1,0}$ is the subvector of nonzero parameters. 

\item $d\gg\max\left(\lambda_{1},\lambda_{2}\right)$, $\lambda_1\gg  uN_0^{-1/2}$, and $\lambda_2\gg uN_0^{-1/2}$,
where $uN_{0}^{-1/2}=o\left(1\right)$, $u\gg\sqrt{\log N_{0}}$, and $N_0 = NT$.

\item For any two locations $i$ and $i^\prime$ in the given connected network $\mathcal{G} = (\mathcal{V},\mathcal{E}_0)$ , if they are from the same cluster, then there exists a path connecting them such that all
locations on the path belong to the same cluster. 
}
\end{enumerate}

{\bc Conditions (C1) - (C2) are commonly used conditions for the design matrix in penalized problems \citep{fan2011nonconcave,wang2019spatial,kwon2012large} to guarantee the existence of the oracle estimator. It is reasonable to assume that the elements in the design matrix are bounded, which is also used in \cite{kwon2012large} when discussing logistic regression models. The assumption for the minimum eigenvalue is needed for regression problems related to the Fisher information. In Condition (C2), $\bm{\mu}(\cdot)$ represents the vector of the expected value of $y_{it}$, and it is reasonable to assume that the expected values of Poisson random variables are bounded in the small ball of the true parameter space. If we assume that the true parameter space is bounded, then Condition (C2) holds if assuming the elements in the design matrix are bounded as in Condition (C1). Note that in \cite{fan2011nonconcave}, they discussed general theoretical properties for exponential families. In their Condition 4, they have the assumptions for the design matrix, which included the minimum and maximum eigenvalues of the design matrix. In particular, the assumption of the largest eigenvalue was defined within $\mathcal{N}_0$. Here, since we assumed $\Vert \bm{x}_{it}\Vert_{\max}$ and $\Vert\bm{z}_{it}\Vert_{\max}$ are bounded, and the $\bm{\mu}(\bm{\delta})$ is bounded, Condition (C1) and Condition (C2) together can guarantee the Condition 4 in \cite{fan2011nonconcave}. 

Condition (C3) for tuning parameters $\lambda_1$, $\lambda_2$ and the minimum signal $d$ guarantee the oracle properties of the estimator, which are adjusted based on those in \cite{fan2011nonconcave}. The assumption between the signal ($d$) and tuning parameters is used in models using penalty functions, such as lasso, SCAD and MCP \citep{fan2001variable, kim2008smoothly, kwon2012large, zhang2010nearly}, in order to identify the nonzero values in parameters. In these type of problems, the theoretical properties of estimators are discussed when tuning parameters go to zero. Thus, we have $uN_0^{-1/2} = o(1)$ in Condition (C3). The assumption of $u \gg \sqrt{\log N_0}$ is due to that the number of parameters in the model is a function of $N_0$. The assumptions about $\lambda_1\gg  uN_0^{-1/2}$ and $\lambda_2\gg uN_0^{-1/2}$ are from the proof as Theorem \ref{thm1}, which is due to the number of parameters in the model divergences and has been used in \cite{fan2011nonconcave}.

Condition (C4) is used in \cite{zhang2019distributed} to ensure that the locations within the same cluster are not separated by other clusters in the graph $\mathcal{G}$. In Figure \ref{fig:mst_examples}, we illustrate the role of MST. And if the MST is like Figure \ref{fig:mst1}, we know that we can recover the true cluster structure. Since the recovery of the cluster structure depends on the structure of the MST, Condition (C4) requires that there exists a path or MST such that the true cluster structure can be identified by identifying nonzero edges.}  We can use a full graph with all pairwise connections or based on adjacency matrices, where the adjacency matrix presents the neighborhood structure of locations. In particular, if location $i$ and $i^\prime$ are neighbors, then $(i,i^\prime) \in \mathcal{E}_0$. Moreover, based on the neighborhood structure, we can also define a graph based on $k$-nearest neighbors. Condition (C4) guarantees that by removing inner-cluster connections, the original graph can be reduced to $K$ subgraphs, which correspond to $K$ clusters.

Theorem \ref{thm1} shows the theoretical properties of the estimator based on the objective function in \eqref{Eq:obj_new} for a given tree $\mathcal{T}$. The proof is provided in Appendix \ref{subsec:thm1}.

\begin{thm}
\label{thm1}
    Assume conditions (C1)-(C3) hold, $p$, $q$, $N$ and the number of change points $J$ are fixed. Then there exists a strict local minimizer $\hat{\bm{\theta}}=\left(\hat{\bm{\theta}}_{1}^{\top},\hat{\bm{\theta}}_{2}^{\top}\right)^{\top}$
of $Q\left(\bm{\theta}\right)$ in \eqref{Eq:obj_new} such that $\hat{\bm{\theta}}_{2}=\bm{0}$
with probability tending to 1 as $T\rightarrow\infty$ and $\Vert\hat{\bm{\theta}}_1-\bm{\theta}_{\mathcal{T},1,0}\Vert=O_{P}\left(\sqrt{s/N_{0}}\right)$, where $N_0 = NT$.
\end{thm}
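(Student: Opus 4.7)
The plan is to follow the oracle-estimator-plus-KKT strategy familiar from \cite{fan2011nonconcave}. First I would define the oracle estimator $\hat{\bm{\theta}}_1^{or}$ as the minimizer of the unpenalized (reduced) negative log-likelihood $l(\bm{\theta}_1,\bm{0})$ obtained by fixing the null coordinates at zero. Conditions (C1)--(C2) give that the restricted Fisher information $\frac{1}{N_0}\mathbb{U}_1^\top \mathrm{Diag}(\bm{\mu})\mathbb{U}_1$ has minimum eigenvalue bounded below by $c_1 M_1$ on $\mathcal{N}_0$, so $l(\cdot,\bm{0})$ is strongly convex in a neighborhood of $\bm{\theta}_{\mathcal{T},1,0}$. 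Combining this with a second-order Taylor expansion and the bound $\Vert \nabla_{\bm{\theta}_1} l(\bm{\theta}_{\mathcal{T},0})\Vert = O_P(\sqrt{s/N_0})$ (which follows from the Poisson score having coordinate-wise variance $O(1/N_0)$ and bounded design), a standard ball-argument yields $\Vert \hat{\bm{\theta}}_1^{or} - \bm{\theta}_{\mathcal{T},1,0}\Vert = O_P(\sqrt{s/N_0})$.

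Next I would show that $\hat{\bm{\theta}} = (\hat{\bm{\theta}}_1^{or\top}, \bm{0}^\top)^\top$ is, with probability tending to one, a strict local minimizer of the full penalized objective $Q(\bm{\theta})$. By Condition (C3), every truly nonzero component in $\bm{\theta}_1$ exceeds the MCP threshold (i.e.\ $d \gg \gamma_j\lambda_j$) so that $\mathcal{P}_{\gamma_j}(\cdot,\lambda_j)$ is flat at these coordinates; hence the first-order condition on $\hat{\bm{\theta}}_1^{or}$ reduces to $\nabla_{\bm{\theta}_1} l(\hat{\bm{\theta}}_1^{or},\bm{0}) = \bm{0}$, which holds by construction. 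For each inactive coordinate (each component of $\bm{\xi}_{(2)}$ and each block $\bm{\varsigma}_i$ with $i\in \mathcal{N}_N$) I need to verify the subgradient inclusion $|\partial_{\xi_t} l(\hat{\bm{\theta}}_1^{or},\bm{0})| < \lambda_1$ and $\Vert \partial_{\bm{\varsigma}_i} l(\hat{\bm{\theta}}_1^{or},\bm{0})\Vert < \lambda_2$. A Taylor expansion writes the left-hand sides as $\nabla_{\bm{\theta}_2} l(\bm{\theta}_{\mathcal{T},0}) + \mathbb{H}_{21}(\hat{\bm{\theta}}_1^{or} - \bm{\theta}_{\mathcal{T},1,0}) + o_P(\cdot)$, with the Hessian block bounded by (C1)--(C2); the second term is $O_P(\sqrt{s/N_0})$ from the previous step.

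The main obstacle is the uniform probabilistic control of $\Vert \nabla_{\bm{\theta}_2} l(\bm{\theta}_{\mathcal{T},0})\Vert_\infty$ over the $O(N_0)$ inactive coordinates. Each component is a normalized sum $\frac{1}{N_0}\sum_{i,t}(\mu_{it} - y_{it}) w_{it,k}$ of independent centered Poisson-linear variables; using that $\mu_{it}$ is bounded by (C2) and the design weights $w_{it,k}$ are bounded by (C1), Bernstein's inequality for sub-exponential summands gives a tail bound of order $\exp(-c u^2)$ for deviations of size $u/\sqrt{N_0}$. A union bound over the $O(N_0)$ coordinates then produces probability $O(N_0 \exp(-c u^2))$, which vanishes precisely when $u \gg \sqrt{\log N_0}$, the condition imposed in (C3). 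Consequently $\Vert \nabla_{\bm{\theta}_2} l(\hat{\bm{\theta}}_1^{or},\bm{0})\Vert_\infty = O_P(u/\sqrt{N_0}) = o_P(\min(\lambda_1,\lambda_2))$, delivering the strict subgradient inequalities.

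Finally, strict local minimality is obtained from the second-order condition: on the active set the penalized Hessian equals the unpenalized Hessian (the MCP's second derivative vanishes beyond the threshold), which is positive definite by (C1)--(C2); the strict inequalities in the subgradient at the inactive coordinates rule out descent directions into $\bm{\theta}_2\neq \bm{0}$. Together these produce a strict local minimizer satisfying $\hat{\bm{\theta}}_2 = \bm{0}$ and $\Vert \hat{\bm{\theta}}_1 - \bm{\theta}_{\mathcal{T},1,0}\Vert = O_P(\sqrt{s/N_0})$ with probability tending to one, as claimed.
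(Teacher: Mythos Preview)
Your proposal is correct and follows essentially the same Fan--Lv oracle-plus-KKT strategy as the paper: establish $O_P(\sqrt{s/N_0})$ consistency of a restricted estimator via a ball argument using strong convexity from (C1)--(C2), then verify the subgradient inequalities at the inactive coordinates via concentration (the paper cites the Fan--Lv sub-Gaussian tail bound where you invoke Bernstein) plus a union bound over the $O(N_0)$ coordinates, with (C3) providing the needed gap. The only cosmetic difference is that you take the candidate $\hat{\bm{\theta}}_1$ to be the unpenalized oracle MLE, whereas the paper takes a local minimizer of the \emph{penalized} restricted objective $\bar{Q}(\bm{\delta})$; under (C3) the MCP is flat on the active components so these coincide, and the paper's Lemma~1 plays the role of the KKT/second-order check you describe informally.
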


\begin{rem}
    Theorem \ref{thm1} gives the sparsity and consistency of the estimator for a given tree $\mathcal{T}$. The sparsity property indicates that the cluster structure $\mathcal{C}_{\mathcal{T}}$ based on $\mathcal{T}$ can be recovered. But $\mathcal{C}_{\mathcal{T}}$ may not be the same as $\mathcal{C}^0$. {\bc Under $\mathcal{T}$, the corresponding tree structure can be the tree in Figure \ref{fig:mst2}. Under this $\mathcal{T}$, $\mathcal{S}_N$ represents two edges with the red cross. Theorem \ref{thm1} guarantees that we can estimate these two nonzero edges consistently, which presents three clusters. }
\end{rem}

Lemma 2 in Appendix \ref{subsec:lemma2} discussed the property of the adaptive MST, $\mathcal{T}^*$, constructed based on weights in \eqref{Eq:weight}. Lemma 2 implies that the cluster structure $\mathcal{C}_{\mathcal{T}^*}$ based on $\mathcal{T}^*$ will be the same as the true cluster structure $\mathcal{C}^0$ with probability approaching 1. {\bc Under $\mathcal{T}^*$, the corresponding tree structure can be the tree in Figure \ref{fig:mst1}. $\mathcal{S}_N^*$ is the index of nonzero $\bm{\varsigma}_i$ based on $\mathcal{T}^*$, which represents the one edge with the red cross. If the estimated index $\hat{\mathcal{S}}^*_N$ is consistent, it indicates that the red cross edge is identified and the cluster structure is identified.} Under $\mathcal{T}^*$, {\bc let $\bm{\theta}_0^* = (\bm{\theta}_{1,0}^{*\top}, \bm{\theta}_{2,0}^{*\top})^\top$ be the transformed true parameters, where $\bm{\theta}^*_{1,0}$ is the subvector of nonzero parameters.} We have the following theorem to summarize the theoretical properties of the adaptive tree based estimator based on \eqref{Eq:obj_new_final}. The proof is provided in Appendix \ref{subsec:thm2}. 
\begin{thm}
\label{thm2}
Assume conditions (C1) - (C4) hold,  $p$, $q$, $N$ and the number of change points $J$ are fixed. Based on the MST, $\mathcal{T}^{*}$, the adaptive tree based estimator $\hat{\bm{\theta}}_{1}$
satisfies that $\Vert\hat{\bm{\theta}}_{1}-\bm{\theta}_{1,0}^*\Vert=O_{P}\left(\sqrt{s/N_{0}}\right)$ and $\hat{\bm{\theta}}_2 = \bm{0}$ as $T\rightarrow \infty$. Furthermore, the sparsity property is that 
$\lim_{T\rightarrow \infty} P\left(\hat{\mathcal{S}}_N^*=\mathcal{S}_N^*\right)=1$ and $\lim_{T\rightarrow \infty} P\left(\hat{\mathcal{S}}_T=\mathcal{S}_T\right)=1$.
\end{thm}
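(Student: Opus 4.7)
My approach is to reduce Theorem \ref{thm2} to Theorem \ref{thm1} applied to the data-driven tree $\mathcal{T}^*$, with Lemma 2 supplying the bridge between the cluster structure induced by $\mathcal{T}^*$ and the true cluster structure $\mathcal{C}^0$. The argument splits naturally into three stages: first control the randomness of the tree, then transfer Theorem \ref{thm1}'s consistency and sparsity to that tree, and finally translate the sparsity back into correct recovery of $\mathcal{S}_N^*$ and $\mathcal{S}_T$.

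Let $\mathcal{A}_T$ denote the event $\{\mathcal{C}_{\mathcal{T}^*} = \mathcal{C}^0\}$. By Lemma 2, $P(\mathcal{A}_T) \to 1$ as $T \to \infty$. On $\mathcal{A}_T$, Condition (C4) guarantees that every edge of $\mathcal{T}^*$ falling inside a true cluster satisfies $\bm{\beta}_i^0 = \bm{\beta}_{i'}^0$, so its associated $\bm{\varsigma}$-block vanishes, while the $K-1$ cross-cluster edges carry the nonzero $\bm{\varsigma}$-blocks. Hence on $\mathcal{A}_T$ the transformed true parameter vector under $\mathcal{T}^*$ is precisely $\bm{\theta}_0^* = (\bm{\theta}_{1,0}^{*\top}, \bm{\theta}_{2,0}^{*\top})^\top$, and its zero/nonzero partition coincides with $(\mathcal{S}_N^*, \mathcal{N}_N^*)$ on the cluster coordinates and $(\mathcal{S}_T, \mathcal{N}_T)$ on the temporal coordinates.

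Next, I would invoke Theorem \ref{thm1} conditionally on $\mathcal{A}_T$, applied to $Q^*(\bm{\theta})$ in \eqref{Eq:obj_new_final} with the tree $\mathcal{T}^*$. Conditions (C2) and (C3) are free of the tree, so they transfer automatically. Condition (C1) requires the eigenvalue lower bound on $N_0^{-1} \mathbb{U}_1^{*\top} \mathbb{U}_1^*$; since $\mathbb{U}_1^*$ is obtained from the original design by the invertible linear transformation $(\tilde{\bm{H}}^*)^{-1} \otimes \bm{I}_p$ and since $N$, $p$, $q$ are fixed (so the set of admissible spanning trees of $\mathcal{G}$ is finite), the minimum eigenvalue stays bounded away from zero uniformly across trees. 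Theorem \ref{thm1} then delivers a strict local minimizer $\hat{\bm{\theta}} = (\hat{\bm{\theta}}_1^\top, \hat{\bm{\theta}}_2^\top)^\top$ of $Q^*$ with $\hat{\bm{\theta}}_2 = \bm{0}$ and $\Vert \hat{\bm{\theta}}_1 - \bm{\theta}_{\mathcal{T}^*,1,0} \Vert = O_P(\sqrt{s/N_0})$. On $\mathcal{A}_T$ this identifies $\bm{\theta}_{\mathcal{T}^*,1,0}$ with $\bm{\theta}_{1,0}^*$, yielding the first two claims of Theorem \ref{thm2}.

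For the sparsity statements, Theorem \ref{thm1}'s conclusion $\hat{\bm{\theta}}_2 = \bm{0}$ already guarantees that every coordinate in $\mathcal{N}_N^*$ and $\mathcal{N}_T$ is estimated as zero with probability tending to one. For the nonzero coordinates, the rate $O_P(\sqrt{s/N_0})$ combined with the signal condition (C3), namely $d \gg \max(\lambda_1, \lambda_2)$ and $\lambda_1, \lambda_2 \gg u N_0^{-1/2}$, forces $\Vert \hat{\bm{\varsigma}}_i \Vert$ for $i \in \mathcal{S}_N^*$ and $|\hat{\xi}_t|$ for $t \in \mathcal{S}_T$ to remain bounded below by a positive quantity with probability approaching one, hence $P(\hat{\mathcal{S}}_N^* = \mathcal{S}_N^*) \to 1$ and $P(\hat{\mathcal{S}}_T = \mathcal{S}_T) \to 1$ follow by the law of total probability together with $P(\mathcal{A}_T) \to 1$. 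The main obstacle I anticipate is the randomness of $\mathcal{T}^*$: Theorem \ref{thm1} is phrased for a fixed tree, so one must justify plugging in a data-driven tree. The cleanest route is to exploit that $N$ is fixed so only finitely many spanning trees of $\mathcal{G}$ exist, letting one take a maximum over trees in the $O_P$ bound and avoiding any uniformity subtleties in the eigenvalue condition of (C1) or in the constants inside the proof of Theorem \ref{thm1}.
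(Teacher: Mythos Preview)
Your proposal is correct and follows essentially the same route as the paper: condition on the event that the adaptive tree $\mathcal{T}^*$ respects the true cluster structure (the paper calls this event $\mathcal{W}$, your $\mathcal{A}_T$), invoke Lemma~2 to get $P(\mathcal{W})\to 1$, and on that event apply Theorem~\ref{thm1} to $Q^*$ to obtain the rate and sparsity, combining via the law of total probability. Your treatment is in fact more careful than the paper's in two respects: you explicitly argue that Condition~(C1) transfers to $\mathbb{U}_1^*$ uniformly over the finitely many spanning trees of $\mathcal{G}$, and you flag the subtlety of plugging a data-driven tree into a theorem stated for a fixed tree, resolving it via the same finiteness observation---points the paper's proof simply takes for granted.
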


\begin{rem}
    In Theorem \ref{thm2}, $\bm{\theta}_{1,0}^*$ depends on the structure of $\mathcal{T}^*$. 
    Due to the one-to-one mapping relationship between $\bm{\varsigma}$ and $\bm{\beta}$, when the sparsity structure of $\bm{\varsigma}$ is recovered, the cluster structure of $\bm{\beta}$, $\mathcal{C}^0$, is also recovered. 
\end{rem}

{\bc

Recall that, $\mathcal{S}_N^*$ is index set of nonzero $\bm{\varsigma}_i$, which represent the nonzero differences between $\bm{\beta}_i$ and $\bm{\beta}_{i^{\prime}}$. The number of clusters $K$ is the number of nonzero index in $\mathcal{S}_N^*$ plus 1. Thus, if $\mathcal{S}_N^*$ is recovered, the number of clusters is properly estimated. Similarly, $\mathcal{S}_T$ is the index set of nonzero $\xi_t$, which represents the difference between $\eta_t - \eta_{t-1}$. Thus, if $\mathcal{S}_T$ is recovered, the number of change points is properly estimated. We have the following corollary to present the results. 

\begin{cor}
    Under the assumption in Theorem \ref{thm2},  we have $\lim_{T\rightarrow \infty} P(\hat{K} = K) = 1$ and $\lim_{T\rightarrow \infty} P(\hat{J} = J)=1$, where $\hat{K}$ is the estimated number of clusters, and $\hat{J}$ is the estimated number of change points.
\end{cor}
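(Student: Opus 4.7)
The plan is to reduce the claim about $\hat K$ and $\hat J$ to the sparsity conclusions of Theorem \ref{thm2}, i.e.\ $P(\hat{\mathcal{S}}_N^{*}=\mathcal{S}_N^{*})\to 1$ and $P(\hat{\mathcal{S}}_T=\mathcal{S}_T)\to 1$. The key observation is that both $K$ and $J$ are deterministic functions of the cardinalities of these index sets, so once the sets themselves match with probability tending to one, so must the cardinalities.

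First, I would argue the temporal statement, since it is the cleaner of the two. By construction, $\xi_t=\eta_t-\eta_{t-1}$ for $t=2,\dots,T$, and the true number of change points $J$ is by definition the number of indices $t$ at which $\eta_t^{0}\neq\eta_{t-1}^{0}$; hence $J=\vert\mathcal{S}_T\vert$. The estimator declares a change point at $t$ precisely when $\hat\xi_t\neq 0$, so $\hat J=\vert\hat{\mathcal{S}}_T\vert$. On the event $\{\hat{\mathcal{S}}_T=\mathcal{S}_T\}$ we have $\hat J=J$, and by Theorem \ref{thm2} this event has probability tending to one.

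Next I would handle the spatial statement, which is slightly more delicate because it relies on the adaptive tree $\mathcal{T}^{*}$ correctly separating clusters. Under $\mathcal{T}^{*}=(\mathcal{V},\mathcal{E}^{*})$, the entries $\bm{\varsigma}_i$ for $i\geq 2$ are in one-to-one correspondence with the edges of $\mathcal{T}^{*}$, and $\bm{\varsigma}_i^{0}\neq\bm 0$ exactly when the two endpoints of the corresponding edge lie in different true clusters. By Lemma 2, with probability tending to one the adaptive tree $\mathcal{T}^{*}$ is such that removing the edges in $\mathbb{C}_{\mathcal{T}^{*}}$ partitions $\mathcal{V}$ into exactly the $K$ true clusters $\mathcal{C}^{0}$; equivalently, $K=\vert\mathcal{S}_N^{*}\vert+1$. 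The estimator assigns two locations to the same cluster iff all $\hat{\bm{\varsigma}}_i$ on the path connecting them in $\mathcal{T}^{*}$ vanish, so $\hat K=\vert\hat{\mathcal{S}}_N^{*}\vert+1$. Hence on the intersection of the Lemma 2 event with $\{\hat{\mathcal{S}}_N^{*}=\mathcal{S}_N^{*}\}$, both numbers and both partitions coincide, giving $\hat K=K$.

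Finally I would combine via a union bound: the probability that either sparsity identity fails, or that $\mathcal{T}^{*}$ fails to recover $\mathcal{C}^{0}$, is at most the sum of three probabilities each going to zero by Theorem \ref{thm2} and Lemma 2. Taking complements yields $P(\hat K=K)\to 1$ and $P(\hat J=J)\to 1$ as $T\to\infty$. There is essentially no hard analytic step here; the only subtle point, and the one I would spell out carefully, is the deterministic bookkeeping identity $K=\vert\mathcal{S}_N^{*}\vert+1$ on the event that $\mathcal{T}^{*}$ is properly constructed, which is exactly the content of Condition (C4) and Lemma 2 and is illustrated by Figure \ref{fig:mst1} versus Figure \ref{fig:mst2}.
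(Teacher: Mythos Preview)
Your proposal is correct and follows essentially the same approach as the paper. The paper's justification, given in the paragraph immediately preceding the corollary, is exactly the bookkeeping identity you spell out: $K=\vert\mathcal{S}_N^{*}\vert+1$ and $J=\vert\mathcal{S}_T\vert$, so recovery of $\mathcal{S}_N^{*}$ and $\mathcal{S}_T$ (Theorem \ref{thm2}) immediately yields recovery of $K$ and $J$. Your explicit invocation of Lemma 2 to justify that $K=\vert\mathcal{S}_N^{*}\vert+1$ holds on a high-probability event is slightly more careful than the paper's presentation, but it is the same reasoning already absorbed into the statement and proof of Theorem \ref{thm2}.
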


Note that the order in our theorems is the theoretical error upper bound that can be achieved with proper $\lambda_1$ and $\lambda_2$ selection. In practice, the numerical error bound would be affected by the selected values of $\lambda_1$ and $\lambda_2$.
}

\section{Simulation}
\label{Sec:sim}

In this section, we use several simulation examples to evaluate the performance of our proposed estimator. First, we compare our proposed method to the traditional approaches in Section \ref{subsec:sim_comp_scan} when the number of clusters is 2. Then, we evaluate our proposed method for different setups in Section \ref{subsec:sim_ncluster}.

\subsection{Comparison to existing approaches}
\label{subsec:sim_comp_scan}

In this section, we will compare our proposed method to spatial scan approaches {\bc and a change point detection approach when the number of clusters is 2 and the number of change points is 1}. To evaluate the performance of recovering cluster structure, we use the estimated number of clusters ($\hat{K}$) and adjusted Rand Index (ARI)\citep{hubert1985comparing}. ARI computes the degree of overlapping between two partitions and is widely used to evaluate the clustering performance against the ground truth. ARI ranges from -1 to 1, and its value is equal to 1 only if a partition is completely identical to the intrinsic structure; its value is equal to -1 when the partition is completely different from the intrinsic structure; its value is close to 0 for a random partition. {\bc To evaluate the performance of detecting change points, Hausdorff distance \citep{delfour2011shapes} is used to calculate the distance between the true change point set and the estimated change point set. A zero value means that these two sets are matched. Since there is only one changed point under this simple case, the Hausdorff distance measures the absolute value between the true time of the change point and the estimated time.}

We simulate our data from the model in \eqref{eq:rate_intercept} with clustered intercepts. The population value $n_{it}$ is generated from a lognormal distribution with parameters of center 10 and standard deviation 0.7. $z_{it}$'s, for $i=1,2,\dots, N$ and $t=1,\dots, T$, are simulated from the standard normal distribution, and the regression coefficient  $\alpha = 0.5$.  For time effect $\eta_t$, we consider one change point with $\eta_t = 0$ for $t=1,\dots, 10$ and $\eta_t = -0.5$ for $t=11,\dots, T=20$. For location-specific effect $\beta_i$, we consider two clusters with $N=100$ and the two spatial structures as shown in Figure \ref{fig:ncluster2_structures}.  Figure \ref{fig:ncluster2_v1} is a $10\times 10$ grid lattice and Figure \ref{fig:ncluster2_v2} contains $100$ random locations uniformly generated from (-1,1). In both structures, the center cluster is smaller than the outside cluster. {\bc We consider two sets of values of $\beta_i$ as below:
\begin{itemize}[leftmargin=2.5cm]
\item[Setting 1:] $\beta_i = -7$ for $i$ in the smaller cluster and $\beta_i = -7.5$ for $i$ in the big one;
\item[Setting 2:] $\beta_i = -7$ for $i$ in the smaller cluster and $\beta_i = -7.25$ for $i$ in the big one. 
\end{itemize}
}

\begin{figure}[H]
\centering
\begin{subfigure}{.44\textwidth}
  \centering
  \includegraphics[width=0.95\linewidth]{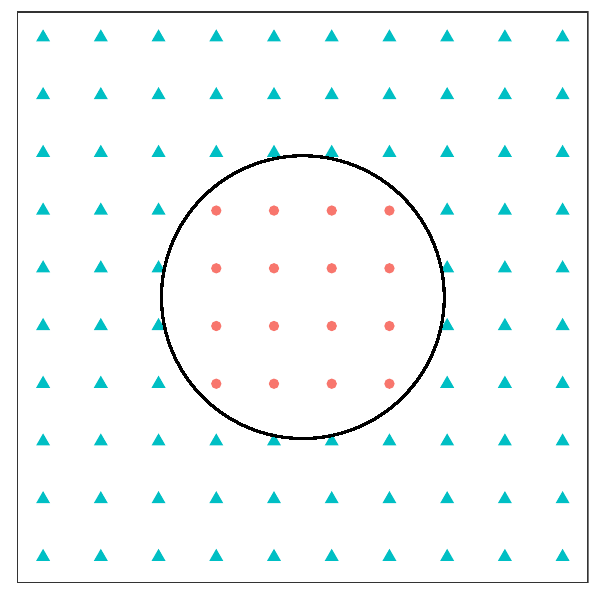}
  \caption{Cluster structure for a $10\times 10$ grid lattice}
  \label{fig:ncluster2_v1}
\end{subfigure}%
\begin{subfigure}{.44\textwidth}
  \centering
  \includegraphics[width=0.95\linewidth]{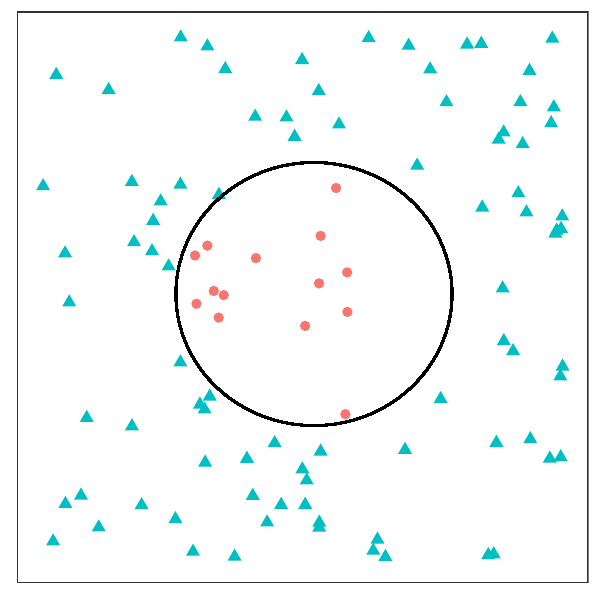}
  \caption{Cluster structure for 100 random locations.}
  \label{fig:ncluster2_v2}
\end{subfigure}
\caption{True cluster structures for simulated data. The color and shape show the underlying cluster structure. The solid lines separate the 2 clusters.  }
\label{fig:ncluster2_structures}
\end{figure}

We compare our proposed adaptive method (denoted as ``adaptive") with several spatial scan statistics for functional data implemented in the R package  {\it HDSpatialScan} \citep{frevent2022r}, including nonparametric functional scan statistic (NPFSS) \citep{smida2022wilcoxon}, parametric functional scan statistic (PFSS) \citep{frevent2021detecting}, distribution-free functional scan statistic (DFFSS) \citep{frevent2021detecting} and univariate rank-based functional scan statistic (URBFSS) \citep{frevent2023investigating}. Besides these approaches, we consider two methods in the R package {\it scanstatistics} \citep{allevius2018scanstatistics}, one is the space-time permutation scan statistic (PSS) \citep{kulldorff2005space} and the other one is population-based Poisson scan statistic (PPSS)\citep{kulldorff2001prospective}. {\bc The change point detection method in \cite{jonas2020} is denoted as ``rmcp". Note that ``rmcp" cannot handle change point detection for a multivariate time series and needs to specify the number of change points. In our simulation, we set the number of change points as the true value of 1, and use the method to detect change points for each location separately.} We also consider the method using MST based on locations, denoted as ``tree".

{\bc Tables \ref{tab:res_nc2_s1} and \ref{tab:res_nc2_s2} show the average ARI, average estimated number of cluster $\hat{K}$ and average Hausdorff distance based on 100 simulations for different methods, along with standard deviation values in parentheses. Note that the scan approaches don't have a change point detection component, thus the Hausdorff distance is not reported. And ``rmcp" is only for change point detection without spatial clustering component, thus ARI and $\hat{K}$ are not reported. We observe that approaches based on scan statistics cannot identify the cluster structure well. This is because these approaches do not consider the effects of covariates and the changes in the temporal domain. When ignoring the spatial pattern and the covariates, ``rmcp" cannot detect the change points correctly. 
When comparing the adaptive method to the tree based approach, we can achieve better performance in terms of estimating cluster structure. When comparing the results for these two settings of $\beta_i$, the ARI of the adaptive approach in Setting 2 is smaller than that in Setting 1, which is because the cluster difference is smaller in Setting 2.}


\begin{table}[H]
\centering
\caption{Average ARI, average $\hat{K}$ and average Hausdorff distance over 100 simulations under spatial grid lattice in Figure \ref{fig:ncluster2_v1}. Standard deviation values are in parentheses.}
\label{tab:res_nc2_s1}
\begin{tabular}{c|ccc|ccc}
  \hline
  & \multicolumn{3}{c|}{Setting 1} &\multicolumn{3}{c}{Setting 2}\\
  \hline
method & ARI & $\hat{K}$ & dist & ARI & $\hat{K}$ & dist \\ 
  \hline
tree & 0.569(0.052) & 8.75(1.64) & 0.24(1.24) & 0.388(0.241) & 6.07(3.21) & 0.24(1.37) \\ 
  adaptive & 0.994(0.064) & 2.01(0.10) & 0.11(0.78) & 0.767(0.273) & 2.60(0.71) & 0.08(0.80) \\ 
  NPFSS & 0.517(0.261) & 1.84(0.44) & - & -0.005(0.017) & 1.08(0.27) & - \\ 
  PFSS & 0.019(0.109) & 1.03(0.17) & - & -0.001(0.006) & 1.02(0.14) & - \\ 
  DFFSS & 0.003(0.028) & 1.02(0.14) & - & 0.001(0.008) & 1.01(0.10) & - \\ 
  URBFSS & 0.192(0.281) & 1.37(0.49) & - & 0.042(0.156) & 1.08(0.27) & - \\ 
  PSS & 0.669(0.023) & 2.00(0.00) & - & 0.525(0.206) & 2.00(0.00) & - \\ 
  PPSS & 0.669(0.023) & 2.00(0.00) & - & 0.525(0.206) & 2.00(0.00) & - \\ 
  rmcp & - & - & 3.29(0.20) & - & - & 3.16(0.22) \\ 
   \hline
\end{tabular}
\end{table}


\begin{table}[H]
\centering
\caption{Average ARI, average $\hat{K}$ and average Hausdorff distance over 100 simulations under spatial random locations in Figure \ref{fig:ncluster2_v2}. Standard deviation values are in parentheses.}
\label{tab:res_nc2_s2}
\begin{tabular}{c|ccc|ccc}
  \hline
  & \multicolumn{3}{c|}{Setting 1} &\multicolumn{3}{c}{Setting 2}\\
  \hline
method & ARI & $\hat{K}$ & dist & ARI & $\hat{K}$ & dist \\ 
  \hline
tree & 0.179(0.011) & 9.68(0.74) & 0.05(0.50) & 0.188(0.016) & 6.25(1.38) & 0.56(2.07) \\ 
  adaptive & 1.000(0.000) & 2.00(0.00) & 0.05(0.50) & 0.752(0.250) & 2.52(0.59) & 0.28(1.41) \\ 
  NPFSS & 0.800(0.123) & 1.98(0.14) & - & 0.027(0.140) & 1.06(0.24) & - \\ 
  PFSS & 0.143(0.307) & 1.20(0.40) & - & 0.002(0.019) & 1.02(0.14) & - \\ 
  DFFSS & 0.003(0.030) & 1.01(0.10) & - & 0.000(0.000) & 1.00(0.00) & - \\ 
  URBFSS & 0.079(0.175) & 1.26(0.44) & - & 0.008(0.038) & 1.04(0.20) & - \\ 
  PSS & 0.516(0.055) & 2.00(0.00) & - & 0.400(0.172) & 2.00(0.00) & - \\ 
  PPSS & 0.516(0.055) & 2.00(0.00) & - & 0.400(0.172) & 2.00(0.00) & - \\ 
  rmcp & - & - & 3.23(0.20) & - & - & 3.13(0.22) \\ 
   \hline
\end{tabular}
\end{table}

\subsection{Evaluation of model performance}
\label{subsec:sim_ncluster}

In this section, we will evaluate the performance of our proposed adaptive approach when the number of clusters is 5, and the number of change points is 2. We will compare the adaptive approach to the tree based approach.  Besides the estimated number of clusters $\hat{K}$ and ARI, we also report the estimated number of changed points ($\hat{J}$) and the F1 score \citep{sasaki2007truth} to evaluate the performance of detecting changed points. F1 score measures classification accuracy (changed or nonchanged) with the highest value of 1 and the smallest value of 0. The higher the value of the F1 score is, the better the accuracy of temporal trend detection is.  The F1 score is defined as $F1 = \frac{2PR}{P+R}$, where $P$ is precision and $R$ is recall,  {\bc $P = \frac{\text{number of true detected change-points}}{\text{number of total detected change-points}}$ and  $R = \frac{\text{number of true detected change-points}}{\text{number of total change-points}}$}.
To evaluate the estimation accuracy, we report the root mean square error (RMSE) for estimating $\bm{\beta}$, $\bm{\eta}$ and $\bm{\alpha}$. RMSE is defined as $\sqrt{\frac{1}{\vert \bm{\theta}\vert}\Vert \hat{\bm{\theta}} -\bm{\theta}\Vert^2}$, where $\bm{\theta}$ can be $\bm{\beta}$, $\bm{\eta}$ or $\bm{\alpha}$,  $\hat{\bm{\theta}}$ is an estimate of $\bm{\theta}$ and $\vert \bm{\theta}\vert$ is the dimension of $\bm{\theta}$.

Similar to the setup in Section \ref{subsec:sim_comp_scan}, we consider two spatial cluster structures: one is a grid lattice, and the other one is based on random locations.  Figure \ref{fig:ncluster5_structures} shows the two underlying true cluster structures.
\begin{figure}[H]
\centering
\begin{subfigure}{.45\textwidth}
  \centering
  \includegraphics[width=0.95\linewidth]{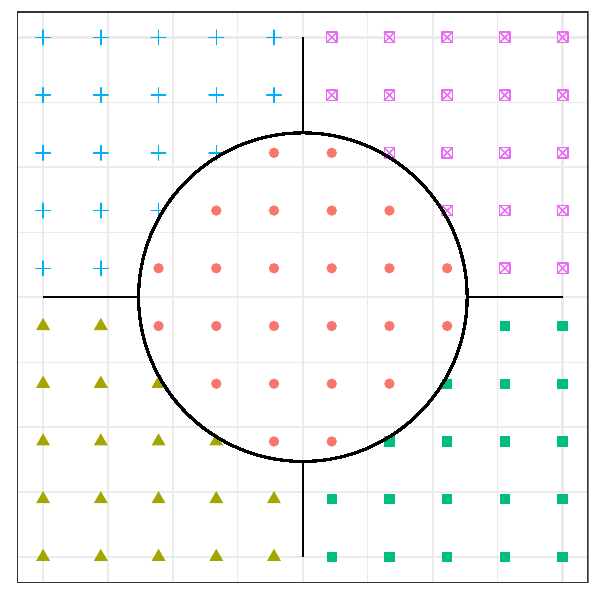}
  \caption{Cluster structure for a $10\times 10$ grid lattice}
  \label{fig:ncluster5_v1}
\end{subfigure}%
\begin{subfigure}{.45\textwidth}
  \centering
  \includegraphics[width=0.95\linewidth]{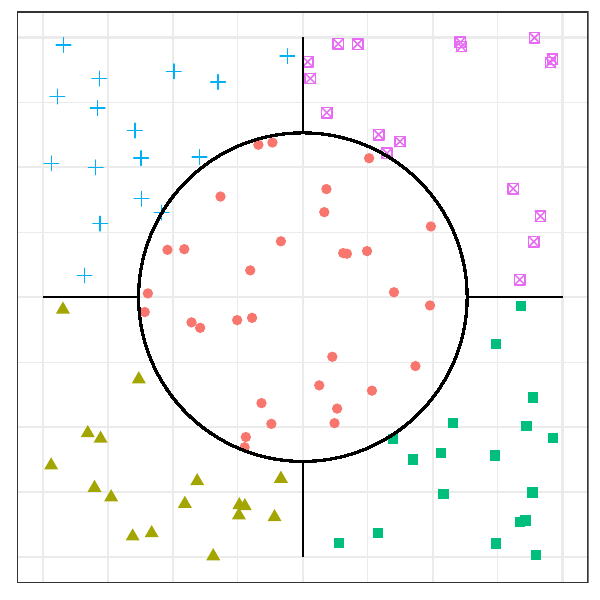}
  \caption{Cluster structure for 100 random locations.}
  \label{fig:ncluster5_v2}
\end{subfigure}
\caption{True cluster structures for simulated data. The color and shape show the underlying cluster structure. The solid lines separate the 5 clusters. }
\label{fig:ncluster5_structures}
\end{figure}

 We simulate data as follows: let $\bm{x}_{it}=[1, x_{it}]^{\top}$ with $x_{it}$'s are generated from $N(0,1)$ and $z_{it}$'s are generated from $N(0,1)$, for $i=1,2,\dots, 100$ and $t=1,2,\dots, 25$. For model parameters, we set $\alpha = 0.5$, and $\bm{\beta}_i$ takes two set of parameters for five clusters $\mathcal{C}_1$, $\mathcal{C}_2$, $\mathcal{C}_3$, $\mathcal{C}_4$ and $\mathcal{C}_5$: 
 
Setting 1: $\bm{\beta}_{i}=(-8,-1)^{T}$ if $i\in\mathcal{C}_{1}$; $\bm{\beta}_{i}=(-7.7,-0.5)^{T}$ if $i\in\mathcal{C}_{2}$; $\bm{\beta}_{i}=(-7.5,0)^{T}$ if $i\in\mathcal{C}_{3}$; $\bm{\beta}_{i}=(-7.2,0.5)^{T}$ if $i\in\mathcal{C}_{4}$; $\bm{\beta}_{i}=(-7,1)^{T}$ if $i\in\mathcal{C}_{5}$;

Setting 2: $\bm{\beta}_{i}=(-8,-0.5)^{T}$ if $i\in\mathcal{C}_{1}$; $\bm{\beta}_{i}=(-7.7,-0.25)^{T}$ if $i\in\mathcal{C}_{2}$; $\bm{\beta}_{i}=(-7.5,0)^{T}$ if $i\in\mathcal{C}_{3}$; $\bm{\beta}_{i}=(-7.2,0.25)^{T}$ if $i\in\mathcal{C}_{4}$; $\bm{\beta}_{i}=(-7,0.5)^{T}$ if $i\in\mathcal{C}_{5}$.  

The cluster difference of setting 2 is smaller than that of setting 1. We assume there are 2 changed points, one is at $t=5$ with a change from 0 to -0.5, and the other one is at $t=15$ with a change from -0.5 to -0.8.

Table \ref{tab:res_n5_set1} and Table \ref{tab:res_n5_set2} show the average ARI, $\hat{K}$, F1 score and $\hat{J}$ over 100 simulations, along with standard deviations in parentheses, for two different spatial cluster structures. Figure \ref{fig:rmse_s1} and Figure \ref{fig:rmse_s2} show the RMSE for estimating $\alpha$, $\bm{\beta}$ and $\bm{\eta}$ under two spatial cluster structures. It can be seen that the structure of trees does not affect change point detection but significantly affects cluster identification. The adaptive method has larger ARI values and smaller RMSE values compared to the tree based approach. The tree based approach tends to identify more clusters. This is because that the initial MST may not reflect the true spatial structure. 

\begin{table}[H]
\centering
\caption{Summary of results over 100 simulations under spatial grid lattice in Figure \ref{fig:ncluster5_v1}}
\label{tab:res_n5_set1}
\begin{tabular}{c|ccccccc}
  \hline
 & algorithm & ARI & $\hat{K}$ & F1 & dist &$\hat{J}$ \\ 
  \hline
  \multirow{2}{*}{S1}& tree & 0.597(0.025) & 16.700(1.314) & 0.998(0.020) & 0.040(0.400) & 2.010(0.100) \\ 
   & adaptive & 0.998(0.009) & 5.010(0.100) & 0.998(0.020) & 0.050(0.500) & 2.010(0.100) \\ 
   \hline
   \multirow{2}{*}{S2}&
    tree & 0.576(0.022) & 14.620(0.850) & 0.996(0.028) & 0.080(0.563) & 2.020(0.141) \\ 
   & adaptive & 0.962(0.049) & 5.070(0.293) & 1.000(0.000) & 0.000(0.000) & 2.000(0.000) \\ 
   \hline
\end{tabular}
\end{table}

\begin{table}[H]
\centering
\caption{Summary of results over 100 simulations under spatial random locations in Figure \ref{fig:ncluster5_v2}}
\label{tab:res_n5_set2}
\begin{tabular}{c|ccccccc}
  \hline
 & algorithm & ARI & $\hat{K}$ & F1 & dist & $\hat{J}$ \\ 
  \hline
  \multirow{2}{*}{S1}& tree & 0.763(0.004) & 13.070(1.018) & 1.000(0.000) & 0.000(0.000) & 2.000(0.000) \\ 
   & adaptive & 0.999(0.004) & 5.010(0.100) & 1.000(0.000) & 0.000(0.000) & 2.000(0.000) \\
   \hline
   \multirow{2}{*}{S2}&
   tree &  0.763(0.009) & 12.030(0.223) & 1.000(0.000) & 0.000(0.000) & 2.000(0.000) \\ 
   & adaptive &  0.985(0.016) & 5.000(0.000) & 1.000(0.000) & 0.000(0.000) & 2.000(0.000) \\ 
   \hline
\end{tabular}
\end{table}

\begin{figure}[H]
\centering
\begin{subfigure}{.47\textwidth}
  \centering
  \includegraphics[width=0.95\linewidth]{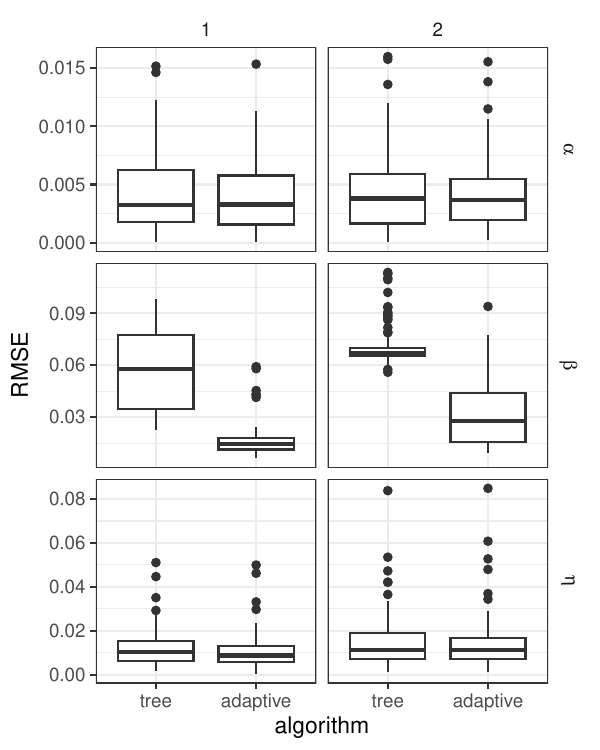}
  \caption{RMSE for spatial lattice grid.}
  \label{fig:rmse_s1}
\end{subfigure}%
\begin{subfigure}{.47\textwidth}
  \centering
  \includegraphics[width=0.95\linewidth]{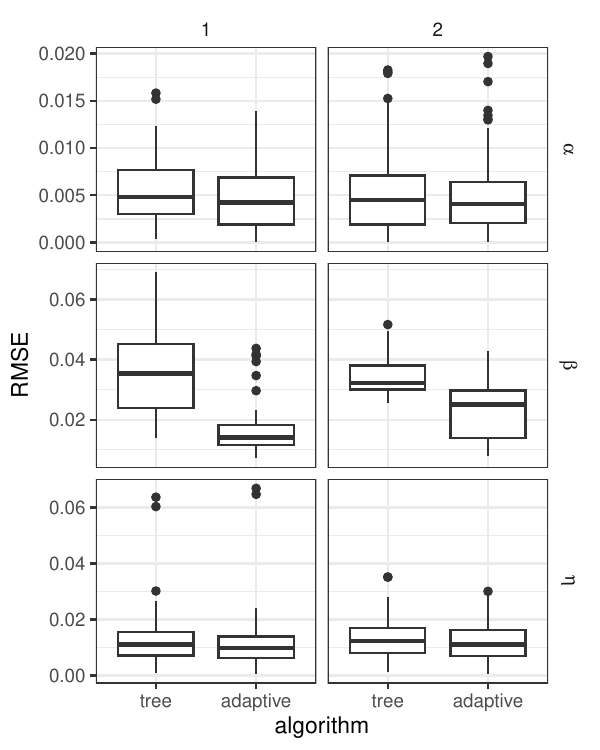}
  \caption{RMSE for spatial random locations}
  \label{fig:rmse_s2}
\end{subfigure}
\caption{RMSE of different parameters for spatial lattice grid (left) spatial random locations (right). In (a) and (b), the left three figures are for $\alpha$, $\bm{\beta}$ and $\bm{\eta}$ under Setting 1 and the right three figures are for $\alpha$, $\bm{\beta}$ and $\bm{\eta}$ under Setting 2. }
\label{fig:rmse}
\end{figure}

{\bc We also report the computational time for Setting 2 under the spatial random locations. The computational time is based on 30 $\lambda_1$ values and 30 $\lambda_2$ values and implemented on a cluster server with Intel Xeon E5-2650 v4 (2.2GHz). Note that the running time also depends on the machines. The real data analysis was implemented on a MacBook Pro with Apple M2 Chip, which is faster. Figure \ref{fig:comp_time} gives the computational time in minutes for the tree based approach and the adaptive approach. It can be seen that the computational time of adaptive approach is longer due to an extra estimation step. As discussed in Section \ref{subsec:algorithm}, we use a two-step procedure to select tuning parameters, and the computational time is proportional to the number of tuning parameters. The computational time would be proportional to the product of the number of two tuning parameters when using a bivariate grid, which would lead to a long computational time.} 

\begin{figure}
    \centering
    \includegraphics[width=0.5\linewidth]{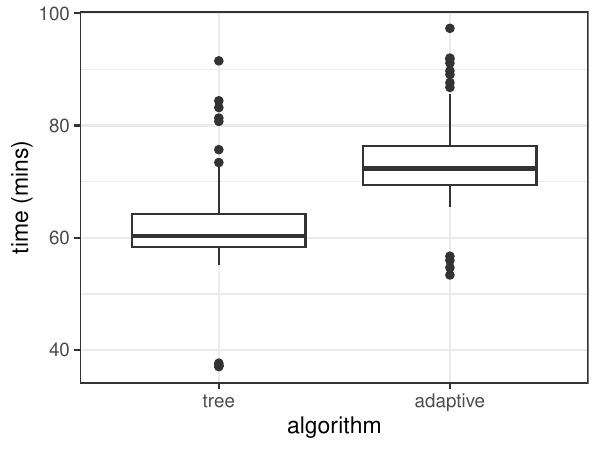}
    \caption{Computational time in minutes for Setting 2 under the spatial random locations }
    \label{fig:comp_time}
\end{figure}

\section{Real data analysis}
\label{Sec:example}

In this section, we apply our proposed method to data from Surveillance, Epidemiology, and End Results (SEER) program \citep{seerdata, duggan2016surveillance}. SEER program provides cancer statistics for the U.S. population, with data collected in several states since 1973. Currently, there are 22 registries, including states and metropolitan areas, which can cover approximately 47.9\% of the U.S. population. SEER data are widely used in different studies for different cancers, such as breast cancer \citep{guha2022incidence}, lung cancer \citep{tolwin2020gender}, colon and rectum cancer \citep{daly2019surveillance}, and respiratory system cancer \citep{azevedo2021mspock}.  There are also some existing studies about cluster detection in SEER data. For example, \cite{geng2022bayesian} conducted an analysis of personal-level data on respiratory cancer to find clusters in Louisiana counties. \cite{chernyavskiy2019heterogeneity} used multilevel age-period cohort models to analyze the heterogeneity of colon and rectum cancer incidence across different counties and age groups. Exploring the spatial cluster structure of incidence can help to allocate resources for interventions and enhance the understanding of spatially varying risk factors \citep{chernyavskiy2019heterogeneity, amin2019spatial}.

As an illustration, we consider colon and rectum cancer in Iowa from 1995 to 2020. The response variable $y_{it}$ is the number of deaths with an age greater than 30 in each county. Under our proposed model, the covariate is poverty from the Small Area Income and Poverty Estimates (SAIPE) Program. The poverty data are obtained from the R package \emph{censusapi}, which has a complete time series record from 1995 to 2020. We consider the following model for the 99 counties in Iowa from the year 1995 to 2020, 
\begin{equation}
    y_{it} \sim \text{Poisson}(n_{it}\mu_{it}),\quad\log \mu_{it} = z_{it} \alpha + \beta_i + \eta_t,
\end{equation}
where $n_{it}$ is the population for age greater than 30 for county $i$ at year $t$, $z_{it}$ presents the estimated poverty for county $i$ at year $t$ from SAIPE, $\alpha$ is the common regression coefficient, $\beta_i$ the county effect, and $\eta_t$ is the time effect. Our goal is to find the cluster structure of $\beta_i$, and to detect the change points of $\eta_t$.  {\bc In this analysis, we use 45 values of $\lambda_1$ and 45 values of $\lambda_2$ with 10 different initial trees and select the model with the smallest BIC. The computational time of these 10 models ranges from 23.88 to 26.74 minutes, which are implemented on a MacBook Pro with Chip Apple M2.}

The estimated coefficient for {\it poverty} is $\hat{\alpha} = 0.0241$ with standard error 0.0088, which indicates that as poverty increases, the expected death rate increases.  Figure \ref{fig:countyeffect} gives the estimated cluster structure of county effects. There is one isolated county, Johnson County, which has the smallest estimated $\hat{\beta}_i = -8.218$ with a standard error of 0.0537. This county is where Iowa City and the University of Iowa are located. The cluster with $\hat{\beta}_i = -8.020$ with a standard error of 0.0242 has the second smallest county effect. This cluster includes four counties: Warren, Dallas, Polk, and Story. Among these four counties, Des Moines is the capital of Iowa and is in Polk County; Ames is in Story County and it is the city where Iowa State University is located. {\bc The potential reasons for smaller intercepts, therefore, lower expected death rates in these five counties after adjusting the effect of the covariate, can be education and income. Note that the University of Iowa and Iowa State University are the two largest public universities in Iowa. The University of Iowa (Johnson County) has hospitals and a public health college; Iowa State University (Story County) also has research related to public health. Warren, Dallas, and Polk counties are the top three counties in terms of average median income during the analyzed period from the SAIPE.} Besides these two small clusters, there are two larger clusters that have larger county effects with estimated $\hat{\beta}_i = -7.822$ (with standard error 0.0180) and $\hat{\beta}_i = -7.598$ (with standard error 0.0146). {\bc The heterogeneity of county effects suggests more resources can be allocated to counties with higher death rates, as access to medical resources in cities and educational facilities can positively influence these rates.}

\begin{figure}[h]
    \centering
    \includegraphics{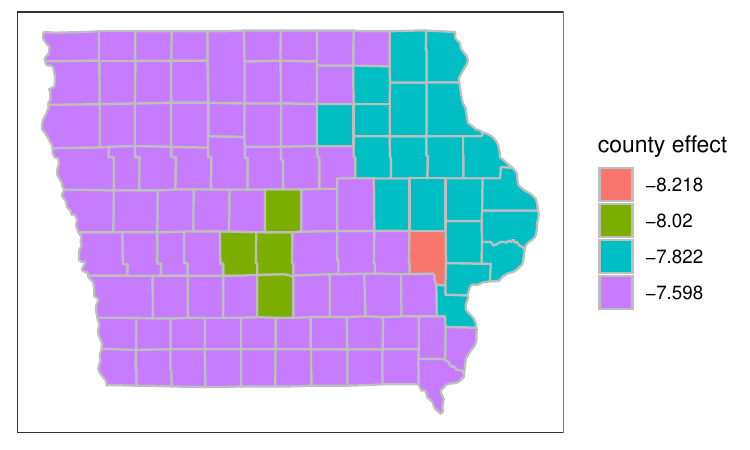}
    \caption{Estimated cluster structure of county effect}
    \label{fig:countyeffect}
\end{figure}

Two change points are detected. One was in 2004, changed from 0 to -0.2133 with a standard error of 0.02; the other was in 2012, changed from -0.2133 to -0.3765 with a standard error of 0.0202. The estimated negative changes indicate that the death rate has a large decline in 2004 and 2012. Retrospective research on medical policies can be conducted to understand the reasons behind the declines. {\bc Possible reasons for change points can be nutrition and insurance. For example, the Nutrition Labeling and Education Act in 1990 required nutrition labeling of most foods. For insurance policies, starting 2001, Medicare coverage for colonoscopy was expanded from the high-risk group to the average-risk group \citep{harewood2004colonoscopy}, and the Affordable Care Act (ACA) enacted in 2010 required both private insurers and Medicare to cover preventive services \citep{richman2016colorectal}. All of these policies are likely to contribute to the reduction in mortality from colon and rectum cancer in Iowa over the years.}

\section{Conclusion and discussions}

In this article, we proposed a new method for clustering locations based on covariate effects and identifying temporal change points for count data based on a Poisson regression model. To achieve the goal, we designed a doubly penalized likelihood approach. In the proposed approach, an adaptive minimum spanning tree (MST) was used to identify spatial clusters, and a fused penalty was used to detect change points. In the simulation study, we used several numerical examples to investigate the performance of the proposed approach. The numerical results showed that the proposed approach outperforms traditional scan statistics in recovering spatial structures. We also found that the adaptive MST can perform better than a random tree in terms of recovering spatial cluster structure. 

The idea of the proposed approach can be extended to semiparametric or nonparametric regression models, which can model the effects of covariates more flexibly, as discussed in linear regression models \citep{liu2019subgroup}. In addition, there is a potential to explore the patient level in the SEER data to identify the heterogeneity for both individual patient and spatial levels.


\begin{appendices}
\section*{Appendix}
\section{Lemmas}
\label{sec:lemmas}
In this part, we introduce two lemmas, which will be used later in the proof of Theorem 1 and Theorem 2.

\subsection{Lemma 1}
\label{subsec:lemma1}

Some notations are defined as below before introducing Lemma \ref{lem1}.

Denote $\hat{\bm{\theta}}$ as an estimator of $\bm{\theta}$. Let $\hat{\mathcal{S}}_N = \{i \in \{2,3,\dots,N\};\Vert \hat{\bm{\varsigma}}_i\Vert \neq 0 \}$, and $\hat{\mathcal{N}}_N = \{i \in \{2,3,\dots,N\};\Vert \hat{\bm{\varsigma}}_i\Vert = 0 \}$ be the nonzero and zero sets of estimates $\hat{\bm{\varsigma}}$. Denote $\hat{\bm{\varsigma}}_{(1)}$ and $\hat{\bm{\varsigma}}_{(2)}$ as the parameter vectors for $\hat{\mathcal{S}}_N$ and $\hat{\mathcal{N}}_N$, respectively.
Similarly, $\hat{\mathcal{S}}_T = \{i \in \{2,3,\dots,T\};\vert \hat{\xi}_t\vert \neq 0 \}$, and $\hat{\mathcal{N}}_T = \{i \in \{2,3,\dots,T\};\vert \hat{\xi}_i\vert = 0 \}$ be the nonzero and zero sets of estimates $\hat{\bm{\xi}}$. And denote $\hat{\bm{\xi}}_{(1)}$ and $\hat{\bm{\xi}}_{(2)}$ as the parameter vectors for $\hat{\mathcal{S}}_T$ and $\hat{\mathcal{N}}_T$, respectively.

Recall that the first column of $\tilde{\bm{H}}^{-1}$ is $\frac{1}{\sqrt{N}}\bm{1}$,
then the column in $\mathbb{X}$ corresponds to $\bm{\varsigma}_{1}$
is $\bm{X}^{*}=\left(\bm{x}_{1}^{\top},\bm{x}_{2}^{\top},\dots,\bm{x}_{N}^{\top}\right)^{\top}$.
Let $\mathbb{X}_{1}$ and $\mathbb{X}_{2}$ denote the submatrices
of $\mathbb{X}$ formed by columns $\hat{\mathcal{S}}_N$ and $\hat{\mathcal{N}}_N$. $\mathbb{M}_{1}$ and $\mathbb{\mathbb{M}}_{2}$ denote the submatrices of $\mathbb{M}$ formed by columns $\hat{\mathcal{S}}_T$ and $\hat{\mathcal{N}}_T$.  

For MCP penalty, we have 
\[
\mathcal{P}^{\prime}_{\gamma}\left(t,\lambda\right)=\begin{cases}
\lambda\text{sign}\left(t\right)-\frac{t}{\gamma}=\left(\lambda-\frac{\vert t\vert}{\gamma}\right)\text{sign}\left(t\right) & \vert t\vert\leq\gamma\lambda,\\
0 & \vert t\vert>\gamma\lambda,
\end{cases}
\]
and $\mathcal{P}_{\gamma}^{\prime}\left(0+,\lambda\right)=\lambda$. For $\bm{v}\in \mathbb{R}^s$ with $\Vert \bm{v}\Vert_0 = s$, we also introduce $\kappa_{\gamma}(\bm{v})$ as follows \citep{fan2011nonconcave},
\[
\kappa_{\gamma}\left(\bm{v}\right)=\lim_{\epsilon\rightarrow0+}\max_{j}\sup_{0<t_{1}<t_{2}\in(\vert v_{j}\vert-\epsilon,\vert v_{j}\vert+\epsilon)}-\frac{\mathcal{P}_{\gamma}^{\prime}\left(t_{2},\lambda\right)-\mathcal{P}^\prime_{\gamma}\left(t_{1}, \lambda\right)}{t_{2}-t_{1}},
\]
which corresponds to the second order derivative of $-\mathcal{P}_{\gamma}(t,\lambda)$ when the second order derivative exists. Note that, for MCP, the second-order derivative does not exist for a finite number of values of $t$, they are $0$ and $\gamma\lambda$. $\kappa_\gamma(\bm{v})=\max_{1\leq j\leq s}-\mathcal{P}^{\prime \prime}(\vert v_j\vert,\lambda)$ provided that the second-order derivative is continuous. Note that, $\mathcal{P}^{\prime \prime}(\vert v_j\vert,\lambda) = 0$ if $\vert v_j\vert >\gamma\lambda$, and $\mathcal{P}^{\prime \prime}(\vert v_j\vert,\lambda) = -\frac{1}{\gamma}$ if $0<\vert v_j\vert <\gamma\lambda.$ If all $\vert v_j\vert >\gamma\lambda$, then $\kappa_\gamma(\bm{v}) =0$. And $\kappa_\gamma(\bm{v}) =\frac{1}{\gamma}$ if some  $0< \vert v_j\vert \leq \frac{1}{\gamma}$. 

Furthermore, for a vector $\bm{u}$, we have the derivative with respect to $\bm{u}$ as
\[
\frac{d\mathcal{P}_{\gamma}\left(\bm{u},\lambda\right)}{\bm{u}} = \frac{\bm{u}}{\Vert \bm{u}\Vert}\mathcal{P}^{\prime}_{\gamma}\left(\Vert\bm{u}\Vert,\lambda\right).
\]
When considering the second-order derivative, we have 
\begin{align*}
\left[\frac{\bm{u}}{\Vert\bm{u}\Vert}\mathcal{P}_{\gamma}^{\prime}\left(\Vert\bm{u}\Vert,\lambda\right)\right]^{\prime} & =\frac{\bm{u}\bm{u}^\top}{\Vert\bm{u}\Vert^{2}}\mathcal{P}_{\gamma}^{\prime\prime}\left(\Vert\bm{u}\Vert,\lambda_{}\right)+\frac{\mathcal{P}_{\gamma_{}}^{\prime}\left(\Vert\bm{u}\Vert,\lambda_{}\right)}{\Vert\bm{u}\Vert}\bm{I}_{p}-\frac{\bm{u}\bm{u}^\top\mathcal{P}_{\gamma_{}}^{\prime}\left(\Vert\bm{u}\Vert,\lambda_{}\right)}{\Vert\bm{u}\Vert^{3}}.
\end{align*}
Note that, when $\Vert \bm{u}\Vert >\gamma\lambda$, the value is 0. Then, for $\bm{u} = (\bm{u}_1^\top,\dots, \bm{u}_s^\top)^\top$, we can define  $\kappa^\prime(\bm{u}) = \max_{1\leq j\leq s} -\left[\frac{\bm{u}_j}{\Vert\bm{u}_j\Vert}\mathcal{P}_{\gamma}^{\prime}\left(\Vert\bm{u}_j\Vert,\lambda\right)\right]^{\prime}$. And we know that if $\Vert \bm{u}_j\Vert >\gamma\lambda$ for all $j$, then $\kappa^\prime(\bm{u}) =0$.

Let $\bm{b}=\bm{Z}\bm{\alpha}+\mathbb{X}\bm{\varsigma}+\mathbb{M}\bm{\xi}$,
$\bm{\mu}\left(\bm{\theta}\right)=\left(n_{it}\exp\left(b_{it}\right)\right)_{1\le i\leq N,1\leq t\leq T}$, and $\bm{\Sigma}\left(\bm{\theta}\right)=$\\$\text{diag}\left(n_{it}\exp\left(b_{it}\right);1\leq i\leq N,1\leq t\leq T\right) $. $\bm{\mu}(\bm{\theta})$ corresponds to the expected values vector of $y_{it}$ for $i=1,\dots, N$ and $t=1,\dots, T$, and $\bm{\Sigma}\left(\bm{\theta}\right)$ is the covariance matrix.  We have the following lemma that gives sufficient conditions on the strict local minimizer of $Q(\bm{\theta})$. The proof follows that in \cite{fan2011nonconcave}.

\begin{lem}
\label{lem1}
$\hat{\bm{\theta}}$ is a strict local minimizer of $Q(\bm{\theta})$ if 

\begin{align}
\frac{1}{NT}\bm{Z}^{\top}\left(\bm{y}-\bm{\mu}(\hat{\bm{\theta}})\right) & =\bm{0} \label{eq_kkt1} \\
\frac{1}{NT}\bm{X}^{*\top}\left(\bm{y}-\bm{\mu}(\hat{\bm{\theta}})\right) & = \bm{0}\label{eq_kkt2}\\
\frac{1}{NT}\mathbb{X}_{1}^{\top}\left(\bm{y}-\bm{\mu}(\hat{\bm{\theta}})\right)-\left(\frac{\hat{\bm{\varsigma}}_{i}}{\Vert \hat{\bm{\varsigma}}_{i}\Vert}\mathcal{\mathcal{P}}_{\gamma_{2}}^{\prime}\left(\Vert\hat{\bm{\varsigma}}_{i}\Vert,\lambda_{2}\right);i\in\hat{\mathcal{S}}_{N}\right) & =\bm{0} \label{eq_kkt3}\\
\frac{1}{NT}\mathbb{M}_{1}^{\top}\left(\bm{y}-\bm{\mu}(\hat{\bm{\theta}})\right)-\left(\mathcal{\mathcal{P}}_{\gamma_{1}}^{\prime}\left(\hat{\xi}_{t},\lambda_{1}\right);t\in\hat{\mathcal{S}}_{T}\right) & =\bm{0} \label{eq_kkt4}\\
\max_{i}\Vert\frac{1}{NT}\mathbb{X}_{2(i)}^{\top}\left(\bm{y}-\bm{\mu}(\hat{\bm{\theta}})\right)\Vert & <\mathcal{P}_{\gamma_{2}}^{\prime}\left(0+\right)=\lambda_{2} \label{eq_kkt5} \\
\Vert\frac{1}{NT}\mathbb{M}_{2}^{\top}\left(\bm{y}-\bm{\mu}(\hat{\bm{\theta}})\right)\Vert_{\infty} & <\mathcal{P}_{\gamma_{1}}^{\prime}\left(0+\right)=\lambda_{1} \label{eq_kkt6}\\
\lambda_{\min}\left(\frac{1}{NT}\bm{Z}^{\top}\bm{\Sigma}(\hat{\bm{\theta}})\bm{Z}\right) & >0 \label{eq_kkt7}\\
\lambda_{\min}\left(\frac{1}{NT}\bm{X}^{*\top}\bm{\Sigma}(\hat{\bm{\theta}})\bm{X}^{*\top}\right) & >0 \label{eq_kkt8}\\
\lambda_{\min}\left(\frac{1}{NT}\mathbb{X}_{1}^{\top}\bm{\Sigma}(\hat{\bm{\theta}})\mathbb{X}_{1}\right) & > \kappa_{\gamma_2}^{\prime}(\hat{\bm{\varsigma}}_{(1)}) \label{eq_kkt9}\\
\lambda_{\min}\left(\frac{1}{NT}\mathbb{M}_{1}^{\top}\bm{\Sigma}(\hat{\bm{\theta}})\mathbb{M}_{1}\right) & > \kappa_{\gamma_1}(\hat{\bm{\xi}}_{(1)}) \label{eq_kkt10}
\end{align}
where $\mathbb{X}_{2(i)}$ corresponds to the submatrix of $\mathbb{X}_2$ formed by $\hat{\bm{\varsigma}}_i$ for $i\in \hat{\mathcal{S}}_N$, and $\hat{\bm{\theta}}_1$ is the nonzero subvector of $\hat{\bm{\theta}}$.
\end{lem}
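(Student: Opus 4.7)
The plan is to show $\hat{\bm{\theta}}$ is a strict local minimizer by bounding $Q(\hat{\bm{\theta}}+\bm{\delta}) - Q(\hat{\bm{\theta}}) > 0$ for all sufficiently small nonzero perturbations $\bm{\delta}$. I would split $\bm{\delta} = (\bm{\delta}_{1}^{\top},\bm{\delta}_{2}^{\top})^{\top}$ conforming to the partition $\hat{\bm{\theta}} = (\hat{\bm{\theta}}_{1}^{\top},\bm{0}^{\top})^{\top}$, so that $\bm{\delta}_{1}$ perturbs the unpenalized coordinates ($\bm{\alpha}$, $\bm{\varsigma}_{1}$) together with the currently nonzero coordinates indexed by $\hat{\mathcal{S}}_{N}$ and $\hat{\mathcal{S}}_{T}$, while $\bm{\delta}_{2}$ activates the currently zero coordinates indexed by $\hat{\mathcal{N}}_{N}$ and $\hat{\mathcal{N}}_{T}$. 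Conditions \eqref{eq_kkt1}--\eqref{eq_kkt4} are first-order stationarity equations on the active set; \eqref{eq_kkt5}--\eqref{eq_kkt6} are strict subdifferential-type inequalities guarding directions into the inactive set; and \eqref{eq_kkt7}--\eqref{eq_kkt10} supply the strict second-order positivity that lifts a critical point to a strict minimizer.

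For the direction $\bm{\delta}_{2}=\bm{0}$, every component of $\hat{\bm{\theta}}_{1}$ on the penalized coordinates has $|\hat{\xi}_{t}|>0$ or $\|\hat{\bm{\varsigma}}_{i}\|>0$, so on a small neighborhood the active MCP terms are twice continuously differentiable in $\bm{\delta}_{1}$. A second-order Taylor expansion gives
\begin{equation*}
Q(\hat{\bm{\theta}}+(\bm{\delta}_{1},\bm{0})) - Q(\hat{\bm{\theta}}) = \tfrac{1}{2}\bm{\delta}_{1}^{\top}\bigl[(NT)^{-1}\mathbb{U}_{1}^{\top}\bm{\Sigma}(\bm{\theta}^{\ast})\mathbb{U}_{1} + \bm{K}\bigr]\bm{\delta}_{1},
\end{equation*}
since the first-order term vanishes by \eqref{eq_kkt1}--\eqref{eq_kkt4}; here $\bm{K}$ is the block-diagonal penalty Hessian, whose only nonzero blocks live on $\bm{\varsigma}_{(1)}$ and $\bm{\xi}_{(1)}$ and are bounded below in the Loewner order by $-\kappa_{\gamma_{2}}^{\prime}(\hat{\bm{\varsigma}}_{(1)})\bm{I}$ and $-\kappa_{\gamma_{1}}(\hat{\bm{\xi}}_{(1)})\bm{I}$ respectively. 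Conditions \eqref{eq_kkt7}--\eqref{eq_kkt10} together ensure that each diagonal sub-block of this Hessian is strictly positive definite even after subtracting the penalty curvature, and combining with the positivity of $\bm{\Sigma}(\hat{\bm{\theta}})$ yields strict positive definiteness of the full quadratic form, so that $Q$ rises strictly quadratically in $\|\bm{\delta}_{1}\|$.

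Turning to $\bm{\delta}_{2}\neq\bm{0}$, I would exploit that for any $i\in\hat{\mathcal{N}}_{N}$ the group MCP contributes exactly $\lambda_{2}\|\bm{\delta}_{\varsigma_{i}}\|$ when $\|\bm{\delta}_{\varsigma_{i}}\| \le \gamma_{2}\lambda_{2}$, and by Cauchy--Schwarz together with \eqref{eq_kkt5} the competing first-order term from the likelihood is strictly dominated, leaving a net lower bound of $(\lambda_{2} - \|(NT)^{-1}\mathbb{X}_{2(i)}^{\top}(\bm{y}-\bm{\mu}(\hat{\bm{\theta}}))\|)\|\bm{\delta}_{\varsigma_{i}}\|$ with strictly positive coefficient; an analogous scalar computation with \eqref{eq_kkt6} handles each $t\in\hat{\mathcal{N}}_{T}$. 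The main obstacle I expect is assembling these one-directional inequalities into a single lower bound on $Q(\hat{\bm{\theta}}+\bm{\delta})-Q(\hat{\bm{\theta}})$ that is strictly positive for every small $\bm{\delta}\neq\bm{0}$, because the mixed second-order terms of the likelihood couple $\bm{\delta}_{1}$ with $\bm{\delta}_{2}$ and must be absorbed into the strict slack provided by \eqref{eq_kkt5}--\eqref{eq_kkt6} together with the strict active-block positivity. The standard device, following the template of \cite{fan2011nonconcave} adapted here to the mixed scalar/group MCP and the Poisson log-likelihood, is a Cauchy--Schwarz/AM--GM splitting of the cross terms, so that what is attributed to $\bm{\delta}_{1}$ remains quadratically positive and what is attributed to $\bm{\delta}_{2}$ remains linearly positive.
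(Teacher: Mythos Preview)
Your approach is correct in outline but takes a different route from the paper. The paper follows the Fan--Lv template via a two-stage projection argument rather than a single full-space expansion: first it restricts $Q$ to the active subspace $\mathcal{B}=\{\bm{\theta}:\bm{\theta}_c=\bm{0}\}$ and uses the second-order conditions \eqref{eq_kkt7}--\eqref{eq_kkt10} to conclude that $\hat{\bm{\theta}}$ is a strict minimizer in a ball $\mathcal{N}_0\subset\mathcal{B}$; then for any $\bm{\phi}_1$ in a small full-dimensional ball $\mathcal{N}_1$ it projects to $\bm{\phi}_2\in\mathcal{B}$ and shows $Q(\bm{\phi}_1)>Q(\bm{\phi}_2)$ by a mean-value theorem along the segment from $\bm{\phi}_2$ to $\bm{\phi}_1$, invoking continuity of $\bm{\mu}(\cdot)$ so that the strict inequalities \eqref{eq_kkt5}--\eqref{eq_kkt6} persist with margin $\mathcal{P}'_{\gamma}(\delta,\lambda)$ throughout a $\delta$-ball around $\hat{\bm{\theta}}$. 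The payoff is that this completely sidesteps the cross-term obstacle you flag: because the projection keeps the active coordinates fixed and only varies the inactive ones, the mean-value expression is purely first order in $\bm{\delta}_2$, and no quadratic $\bm{\delta}_1$--$\bm{\delta}_2$ coupling ever appears. Your direct-expansion route with Cauchy--Schwarz/AM--GM absorption of the cross terms should also go through, but it is heavier and the paper's device is the cleaner one to adopt. One minor slip: at a newly activated block the group MCP contributes $\lambda_2\Vert\bm{\delta}_{\varsigma_i}\Vert - \Vert\bm{\delta}_{\varsigma_i}\Vert^2/(2\gamma_2)$, not exactly $\lambda_2\Vert\bm{\delta}_{\varsigma_i}\Vert$; the quadratic correction is harmless for small perturbations but should be carried along.
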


\begin{proof}

First, we write the likelihood function  in \eqref{Eq:nll} as $l(\bm{\theta})$ in the following way
\begin{equation}
\label{eq:nll_new}
 l\left(\bm{\theta}\right)=\frac{1}{NT}\left(-\bm{y}^{\top}\left(\bm{Z}\bm{\alpha}+\mathbb{X}\bm{\varsigma}+\mathbb{M}\bm{\xi}\right)+\bm{1}^{\top}\bm{\mu}\left(\bm{\theta}\right)\right)=\frac{1}{NT}\left(-\bm{y}^{\top}\mathbb{U}\bm{\theta}+\bm{1}^{\top}\bm{\mu}\left(\bm{\theta}\right)\right).
\end{equation}
Then, the first order derivative and the second order derivative have the following forms,
\begin{align*}
\nabla l\left(\bm{\theta}\right) & =-\frac{1}{NT}\mathbb{U}^{\top}\left(\bm{y}-\bm{\mu}\left(\bm{\theta}\right)\right),\\
\nabla^{2}l\left(\bm{\theta}\right) & =\frac{1}{NT}\mathbb{U}^{\top}\bm{\Sigma}\left(\bm{\theta}\right)\mathbb{U}.
\end{align*}

Recall $\mathbb{U}=\left(\bm{Z},\mathbb{X},\mathbb{M}\right) = (\bm{Z},\bm{X}^*, \mathbb{X}_1, \mathbb{X}_2, \mathbb{M}_1, \mathbb{M}_2)$, $\hat{\bm{\theta}}$
is a local minimizer of the objective function $Q\left(\bm{\theta}\right)$
if it satisfies the Karush-Kuhn-Tucker (KKT) conditions, that is 
\[
\frac{1}{NT}\mathbb{U}^{\top}\left(\bm{y}-\bm{\mu}\left(\hat{\bm{\theta}}\right)\right)-\bm{v}=\bm{0}.
\]
For $\bm{\alpha}$ and $\bm{\varsigma}_{1}$, there are no penalty
terms so the corresponding $v_{j}=0$. Thus, we have conditions \eqref{eq_kkt1} and \eqref{eq_kkt2}. For $i\in \hat{\mathcal{S}}_N$,
$\Vert\hat{\bm{\varsigma}}_{i}\Vert\neq0$, the corresponding values
in $\bm{v}$ is $\frac{\hat{\bm{\varsigma}}_{i}}{\Vert\hat{\bm{\varsigma}}_{i}\Vert}\mathcal{\mathcal{P}}_{\gamma_{2}}^{\prime}\left(\Vert\hat{\bm{\varsigma}}_{i}\Vert,\lambda_{2}\right)$; for $t\in \hat{\mathcal{S}}_T$, the corresponding element in $\bm{v}$ is $\mathcal{\mathcal{P}}_{\gamma_{1}}^{\prime}\left(\xi_{t},\lambda_{1}\right)$. Thus, we have conditions \eqref{eq_kkt3} and \eqref{eq_kkt4}. For $i \in \hat{\mathcal{N}}_N$, the elements in $\bm{v}$ are $\Vert\bm{v}_{i}\Vert$, which
is any value between $[-\mathcal{P}^{\prime}\left(0+\right),\mathcal{\mathcal{P}}^{\prime}\left(0+\right)]=\left[-\lambda_2,\lambda_2\right]$ \citep{Loh2015},
and for $t\in \hat{\mathcal{N}}_T$, the element in $\bm{v}$ is between $[-\lambda_1,\lambda_1].$  Thus, we have the conditions in \eqref{eq_kkt5} and \eqref{eq_kkt6}.

Now consider the second order conditions. For the parameters without penalty terms, we have \eqref{eq_kkt7} and \eqref{eq_kkt8}. For $\hat{\bm{\varsigma}}_{(1)}$ and $\hat{\bm{\xi}}_{(1)}$, we have 
\begin{align}
\lambda_{\min}\left(\frac{1}{NT}\mathbb{X}_{1}^{\top}\bm{\Sigma}\left(\hat{\bm{\theta}}\right)\mathbb{X}_{1}\right) & \geq\kappa_{\gamma_{2}}^{\prime}(\hat{\bm{\varsigma}}_{(1)}) \label{eq_kappa2},\\
\lambda_{\min}\left(\frac{1}{NT}\mathbb{M}_{1}^{\top}\bm{\Sigma}\left(\hat{\bm{\theta}}\right)\mathbb{M}_{1}\right) & \geq\kappa_{\gamma_{1}}(\hat{\bm{\xi}}_{(1)}) \label{eq_kappa1}.
\end{align}

Note that \eqref{eq_kappa2} and \eqref{eq_kappa1} are nonstrict inequalities. Next, we will show that strict inequalities are sufficient conditions for strict local minimizer. 

First consider the $Q(\bm{\theta})$ on the $\Vert\hat{\bm{\theta}}\Vert_{0}$
dimensional subspace $\mathcal{B}=\left\{ \bm{\theta}\in\mathbb{R}^{q+Np+T-1}:\bm{\theta}_{c}=\bm{0}\right\} $,
where $\bm{\theta}_{c}$ is the subvector of $\bm{\theta}$ formed
by components in $\hat{\mathcal{N}}_{N}$ and $\hat{\mathcal{N}}_{T}$.
Based on second order conditions, $Q\left(\bm{\theta}\right)$ is
strictly convex in a ball $\mathcal{N}_{0}$ in the subspace $\mathcal{B}$
centered at $\hat{\bm{\theta}}.$ Thus, $\hat{\bm{\theta}}$ is a
unique minimizer of $Q\left(\bm{\theta}\right)$ in the neighborhood
$\mathcal{N}_{0}$.

Next, we need to show that $\hat{\bm{\theta}}$ is a strict local minimizer
of $Q\left(\bm{\theta}\right)$ on the space $\mathbb{R}^{q+Np+T-1}$.
Consider a sufficiently small ball $\mathcal{N}_{1}$ in $\mathbb{R}^{q+Np+T-1}$
centered at $\hat{\bm{\theta}}$ such that $\mathcal{N}_{1}\cap\mathcal{B}\subset\mathcal{N}_{0}$.
So we need to show that $Q(\hat{\bm{\theta}})<Q\left(\bm{\phi}_{1}\right)$
for any $\bm{\phi}_{1}\in\mathcal{N}_{1}\backslash\mathcal{N}_{0}$
. Let $\bm{\phi}_{2}$ be the projection of $\bm{\phi}_{1}$ onto
the subspace $\mathcal{B}$. Then, $\bm{\phi}_{2}\in\mathcal{N}_{0}$,
thus $Q(\hat{\bm{\theta}})<Q\left(\bm{\phi}_{2}\right)$
for $\bm{\phi}_{2}\neq\hat{\bm{\theta}}$ since $\hat{\bm{\phi}}$
is a local strict minimizer in $\mathcal{N}_{0}$. Next, we will show
that $Q\left(\bm{\phi}_{2}\right)<Q\left(\bm{\phi}_{1}\right)$. By
the mean-value theorem, 
\begin{equation}
\label{eq:mean_value}
    Q\left(\bm{\phi}_{1}\right)-Q\left(\bm{\phi}_{2}\right)=\left[Q^{\prime}\left(\bm{\phi}_{0}\right)\right]^{\top}\left(\bm{\phi}_{1}-\bm{\phi}_{2}\right),
\end{equation}
where $\bm{\phi}_{0}$ is a vector between $\bm{\phi}_{1}$ and $\bm{\phi}_{2}$,
that is $\bm{\phi}_{0}=\alpha_{0}\bm{\phi}_{1}+\left(1-\alpha_{0}\right)\bm{\phi}_{2}$
and $\alpha_{0}\in\left(0,1\right)$. Based on the definition of $\bm{\phi}_{1}$
and $\bm{\phi}_{2}$, and that $\bm{\alpha},\bm{\varsigma}_{1},\bm{\varsigma}_{(1)},\bm{\xi}_{(1)}$
are the elements of $\bm{\phi}_{1}$ in $\mathcal{B}$, then $\bm{\phi}_{0}=\left(\bm{\alpha},\bm{\varsigma}_{1},\bm{\varsigma}_{\left(1\right)},\bm{\xi}_{\left(1\right)},\alpha_{0}\bm{\varsigma}_{\left(2\right)},\alpha_{0}\bm{\xi}_{\left(2\right)}\right)$.
Thus, the right hand side of \eqref{eq:mean_value} can be expressed as 
\begin{align}
  & -\left[\frac{1}{NT}\mathbb{U}_{2}^\top\left(\bm{y}-\bm{\mu}(\hat{\bm{\phi}}_{0})\right)\right]^\top\alpha_{0}\left(\bm{\varsigma}_{\left(2\right)}^{\top},\bm{\xi}_{\left(2\right)}^{\top}\right)^{\top} \nonumber\\
+ &\sum_{i\in\hat{\mathcal{N}}_{N}}\alpha_{0}\Vert\bm{\varsigma}_{i}\Vert\mathcal{\mathcal{P}}_{\gamma_{2}}^{\prime}\left(\Vert\alpha_{0}\bm{\varsigma}_{i}\Vert,\lambda_{2}\right)+\sum_{t\in\hat{\mathcal{N}}_{T}}\alpha_{0}\mathcal{\mathcal{P}}_{\gamma_{1}}^{\prime}\left(\vert\alpha_{0}\xi_{t}\vert,\lambda_{1}\right)\vert\xi_{t}\vert.  \label{eq:lemma1_bound}
\end{align}

Based on the conditions, there exists some $\delta>0$ such that for
any $\bm{\theta}$ in a ball centered at $\hat{\bm{\theta}}$ with
radius $\delta$, 
\begin{align*}
\Vert\frac{1}{NT}\mathbb{X}_{2(i)}^{\top}\left(\bm{y}-\bm{\mu}\left(\bm{\theta}\right)\right)\Vert_{\infty} & <\mathcal{P}_{\gamma_{2}}^{\prime}\left(\delta,\lambda_{2}\right)=\lambda_{2}-\frac{\delta}{\gamma_{2}},\\
\Vert\frac{1}{NT}\mathbb{M}_{2}^{\top}\left(\bm{y}-\bm{\mu}\left(\bm{\theta}\right)\right)\Vert_{\infty} & <\mathcal{P}_{\gamma_{1}}^{\prime}\left(\delta,\lambda_{1}\right)=\lambda_{1}-\frac{\delta}{\gamma_{1}}.
\end{align*}

Consider the radius of the ball $\mathcal{N}_{1}$ less than $\delta$,
then $\vert\xi_{t}\vert<\delta$ for $t\in\hat{\mathcal{N}}_{T}$ and $\Vert\bm{\varsigma}_{i}\Vert<\delta$ for
$i\in\mathcal{\hat{N}}_{N}$, and above holds for any $\bm{\theta}\in\mathcal{N}_{1}$. Thus, $\mathcal{\mathcal{P}}_{\gamma_{2}}^{\prime}\left(\Vert\alpha_{0}\bm{\varsigma}_{i}\Vert,\lambda_{2}\right) > \mathcal{\mathcal{P}}_{\gamma_{2}}^{\prime}\left(\delta,\lambda_{2}\right)$ and $\mathcal{\mathcal{P}}_{\gamma_{1}}^{\prime}\left(\vert\alpha_{0}\xi_{t}\vert,\lambda_{1}\right) > \mathcal{\mathcal{P}}_{\gamma_{1}}^{\prime}\left(\delta,\lambda_{1}\right)$. Together with $\bm{\phi}_{0}\in\mathcal{N}_{1},$ \eqref{eq:lemma1_bound} is strictly greater than the following
\begin{align*}
 & -\alpha_{0}\left(\lambda_{2}-\frac{\delta}{\gamma_{2}}\right)\sum_{i\in\hat{\mathcal{N}}_{N}}\Vert\bm{\varsigma}_{i}\Vert-\alpha_{0}\left(\lambda_{1}-\frac{\delta}{\gamma_{1}}\right)\sum_{t\in\mathcal{\hat{N}}_{T}}\vert\xi_{t}\vert\\
 & +\alpha_{0}\left(\lambda_{2}-\frac{\delta}{\gamma_{2}}\right)\sum_{i\in\mathcal{\hat{N}}_{N}}\Vert\bm{\varsigma}_{i}\Vert+\alpha_{0}\left(\lambda_{1}-\frac{\delta}{\gamma_{1}}\right)\sum_{t\in\mathcal{\hat{N}}_{T}}\vert\xi_{t}\vert=0.
\end{align*}
Thus, $Q\left(\bm{\phi}_{1}\right)>Q\left(\bm{\phi}_{2}\right)>Q(\hat{\bm{\theta}})$.
This completes the proof. 
\end{proof}

\subsection{Lemma 2}
\label{subsec:lemma2}

\begin{lem}
    \label{lem2} 
    Under the conditions in Theorem 1, given the MST based on the weights in \eqref{Eq:weight}, consider any two locations i and $i^{\prime}$ in the same cluster. Under Condition (C4), there exists a path in the MST connecting $i$ and $i^{\prime}$ such that all the locations on the path belong to the same cluster with probability approaching 1 as local sample size $T\rightarrow\infty$. 
\end{lem}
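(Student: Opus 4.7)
The plan is to combine the consistency of $\hat{\vbeta}_\mathcal{T}$ from Theorem \ref{thm1} with a standard MST exchange argument, using Condition (C4) to guarantee that same-cluster locations are linked by a within-cluster path in the original graph $\mathcal{G}$. First, I would translate Theorem \ref{thm1} into a statement about the adaptive weights $w_{ii'}$. Since $\hat{\vbeta}_\mathcal{T}$ is $\sqrt{s/N_0}$-consistent componentwise (with $N$ fixed and $T\to\infty$), a finite union bound over the edge set $\mathcal{E}_0$ gives:
\begin{itemize}
\item[(i)] for every edge $(i,i')\in\mathcal{E}_0$ with $\vbeta_i^0=\vbeta_{i'}^0$, $w_{ii'}=\|\hat{\vbeta}_{\mathcal{T},i}-\hat{\vbeta}_{\mathcal{T},i'}\|=O_P(\sqrt{s/N_0})=o_P(1)$;
\item[(ii)] for every edge $(i,i')\in\mathcal{E}_0$ with $\vbeta_i^0\ne\vbeta_{i'}^0$, the triangle inequality gives $w_{ii'}\ge \|\vbeta_i^0-\vbeta_{i'}^0\| - o_P(1) \ge 2d - o_P(1)$.
\end{itemize}
Let $\mathcal{A}_T$ be the event that every ``within-cluster'' weight in (i) is strictly smaller than every ``between-cluster'' weight in (ii). Since $d>0$ is fixed and the number of edges is finite, $P(\mathcal{A}_T)\to 1$ as $T\to\infty$.

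Next, working on $\mathcal{A}_T$, I would argue by contradiction using the MST exchange property. Suppose two locations $i,i'\in\mathcal{C}_k^0$ are connected in $\mathcal{T}^*$ by a path that contains some between-cluster edge $e=(u,u')$ with $\vbeta_u^0\ne\vbeta_{u'}^0$. Removing $e$ partitions $\mathcal{T}^*$ into two subtrees $T_1,T_2$, and $i$ and $i'$ land in different subtrees. By Condition (C4), there is a path $P\subset\mathcal{G}$ from $i$ to $i'$ whose nodes all lie in $\mathcal{C}_k^0$; this path must cross the cut $(T_1,T_2)$, so it contains at least one edge $e'=(v,v')\in\mathcal{E}_0$ with $v\in T_1, v'\in T_2$, and crucially $\vbeta_v^0=\vbeta_{v'}^0$ (both in $\mathcal{C}_k^0$). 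On $\mathcal{A}_T$ we have $w_{e'}<w_e$, so $(\mathcal{T}^*\setminus\{e\})\cup\{e'\}$ is a spanning tree of strictly smaller total weight, contradicting the MST optimality of $\mathcal{T}^*$. Hence no such path can use a between-cluster edge, which is exactly the claim.

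The only real obstacle is step one: controlling the weights uniformly over all candidate edges. Because $N$ is treated as fixed in Theorem \ref{thm1} and the graph has only $|\mathcal{E}_0|$ edges, a union bound suffices and no sharper concentration is needed; the constant $d>0$ guarantees the required separation between the two weight regimes. The rest is the deterministic MST cut/exchange argument, and Condition (C4) is used precisely to supply the crossing edge $e'$ that lies entirely inside the cluster $\mathcal{C}_k^0$. Thus with probability tending to one the unique $i$-to-$i'$ path in $\mathcal{T}^*$ stays inside $\mathcal{C}_k^0$, completing the proof.
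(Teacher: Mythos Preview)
Your proposal is correct and follows essentially the same approach as the paper: establish via Theorem~\ref{thm1} that within-cluster weights $w_{ii'}$ are $o_P(1)$ while between-cluster weights are bounded away from zero, and then conclude using an MST argument together with Condition (C4). The paper's proof carries out the weight-separation step in the same way (triangle inequality on $\hat{\vbeta}_{\mathcal{T},i}-\hat{\vbeta}_{\mathcal{T},i'}$) and then simply cites \cite{zhang2019distributed} for the graph-theoretic conclusion, whereas you spell out the cut/exchange contradiction explicitly; your version is thus a bit more self-contained but not a different route. One minor caution: the lower bound $2d$ you invoke is defined through the nonzero $\bm{\varsigma}_i$ along the \emph{initial} tree edges, so it need not equal $\min_{(i,i')\in\mathcal{E}_0:\,\vbeta_i^0\ne\vbeta_{i'}^0}\|\vbeta_i^0-\vbeta_{i'}^0\|$; since $N$ is fixed you only need \emph{some} positive constant here, so the argument goes through unchanged once you replace $2d$ by that minimum.
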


\begin{proof}
    For a given tree $\mathcal{T}$,  we have $\Vert\hat{\bm{\theta}}-\bm{\theta}_{\mathcal{T},0}\Vert=O_{P}\left(\sqrt{s/N_{0}}\right)$ from Theorem \ref{thm1}.
Recall that $\bm{\varsigma}_{i}$ represents difference between $\bm{\beta}_{i}-\bm{\beta}_{i^{\prime}}$,
$\left(i,i^{\prime}\right)\in\mathcal{E}$. And the $\bm{\beta}=\left(\tilde{\bm{H}}^{-1}\otimes\bm{I}_{p}\right)\bm{\varsigma}$,
then $\Vert\hat{\bm{\beta}}_{i}-\bm{\beta}_{i}^{0}\Vert=O_{P}\left(\sqrt{s/T}\right)$.
Thus, we can calculate the difference between $\Vert\hat{\bm{\beta}}_{i}-\hat{\bm{\beta}}_{i^{\prime}}\Vert$
in $\mathcal{E}_0$. Recall the weights in \eqref{Eq:weight},
\[
w_{ii^{\prime}}=\begin{cases}
\Vert\hat{\bm{\beta}}_{\mathcal{T},i}-\hat{\bm{\beta}}_{\mathcal{T},i^{\prime}}\Vert & \left(i,i^{\prime}\right)\in\mathcal{E}_0,\\
\infty & \text{otherwise}.
\end{cases}
\]

For $(i,i^\prime) \in \mathcal{E}_0$, we have
\begin{align}
\Vert\hat{\bm{\beta}}_{\mathcal{T},i}-\hat{\bm{\beta}}_{\mathcal{T},i^{\prime}}\Vert & =\Vert\hat{\bm{\beta}}_{\mathcal{T},i}- \bm{\beta}_i^0 -(\hat{\bm{\beta}}_{\mathcal{T},i^{\prime}} -\bm{\beta}_{i^\prime}^0) + \bm{\beta}_i^0 - \bm{\beta}_{i^\prime}^0 \Vert \label{eq:weight_bound}
\end{align}


If $i$ and $i^{\prime}$ are in the same cluster, that means $\bm{\beta}_{i}^{0}=\bm{\beta}_{i^{\prime}}^{0}$, then \eqref{eq:weight_bound} becomes
\[
\Vert\hat{\bm{\beta}}_{\mathcal{T},i}-\hat{\bm{\beta}}_{\mathcal{T},i^{\prime}}\Vert \leq \Vert\hat{\bm{\beta}}_{\mathcal{T},i}-\bm{\beta}_{i}^{0}\Vert+\Vert\hat{\bm{\beta}}_{\mathcal{T},i^{\prime}}-\bm{\beta}_{i^{\prime}}^{0}\Vert=O_{P}\left(\sqrt{s/T}\right).
\]

This implies that $\Vert\hat{\bm{\beta}}_{\mathcal{T},i}-\hat{\bm{\beta}}_{\mathcal{T},i^{\prime}}\Vert$ will converge to 0 with probability approaching 1 if $i$ and $i^\prime$ are in the same cluster.

If $i$ and $i^{\prime}$ are in different clusters, then we have, 
\begin{align*}
\Vert\hat{\bm{\beta}}_{\mathcal{T},i}-\hat{\bm{\beta}}_{\mathcal{T},i^{\prime}}\Vert  &\geq  \Vert \bm{\beta}_i^0 - \bm{\beta}_{i^\prime}^0\Vert - \Vert\hat{\bm{\beta}}_{\mathcal{T},i}-\bm{\beta}_{i}^{0}\Vert-\Vert\hat{\bm{\beta}}_{\mathcal{T},i^{\prime}}-\bm{\beta}_{i^{\prime}}^{0}\Vert\\
& = \Vert\bm{\beta}_{i}^{0}-\bm{\beta}_{i^{\prime}}^{0}\Vert - O_{P}\left(\sqrt{s/T}\right).
\end{align*}
This implies that  $\Vert\hat{\bm{\beta}}_{\mathcal{T},i}-\hat{\bm{\beta}}_{\mathcal{T},i^{\prime}}\Vert$ will converge to a constant depending on the cluster difference with probability approaching 1 if $i$ and $i^\prime$ are in different clusters.

Then, by the same arguments of \cite{zhang2019distributed}, the result in Lemma 2 holds.
\end{proof}

Lemma 2 indicates that the graph will be separated into $K$ subgraphs, corresponding to $K$ clusters, by removing edges among clusters.


\section{Theorems}
\label{sec:thms}
\subsection{Proof of Theorem 1}
\label{subsec:thm1}

In this proof, we use $\bm{\theta}_{1,0}$ instead of $\bm{\theta}_{\mathcal{T},1,0}$ for simplicity. And $\bm{\theta}_0 = (\bm{\theta}_{1,0}^\top, \bm{\theta}_{2,0}^\top)^\top$.

\begin{proof}
    Without loss of generality, assume there are $s_{1}$ $\xi_{t}$ 's
are nonzero, they are $\xi_{2},\dots,\xi_{s_{1}+1}$ ,and there are
$s_{2}$ $\Vert\bm{\varsigma}_{i}\Vert$ are nonzero, they are $\Vert\bm{\varsigma}_{2}\Vert,\dots,\Vert\bm{\varsigma}_{s_{2}+1}\Vert$. And $s=s_{1}+s_{2}p$ is the number of nonzero parameters. We will prove the result in two steps.

\paragraph{Step 1: Consistency in the $s$-dimensional subspace. \newline}

Consider the objective function $Q\left(\bm{\theta}\right)$ on the
$s$-dimensional subspace $\left\{ \bm{\theta}\in\mathbb{R}^{q+Np+T-1}:\bm{\theta}_{\mathcal{S}_{0}^{c}}=\bm{0}\right\} $,
where $\mathcal{S}_{0}=\text{supp}\left(\bm{\theta}_{0}\right)$, which is the nonzero set of the parameters,  and $\mathcal{S}_{0}^{c}$ is the complement. Then, the constrained objective
function is 
\[
\bar{Q}\left(\bm{\delta}\right)=\bar{l}\left(\bm{\delta}\right)+\sum_{t=2}^{s_{1}+1}\mathcal{P}_{\gamma_{1}}\left(\vert\xi_{t}\vert,\lambda_{1}\right)+\sum_{i=2}^{s_{2}+1}\mathcal{P}_{\gamma_{2}}\left(\Vert\bm{\varsigma}_{i}\Vert,\lambda_{2}\right),
\]
where $\bar{l}\left(\bm{\delta}\right)=\frac{1}{NT}\left(-\bm{y}^{\top}\mathbb{U}_{1}\bm{\delta}+\bm{1}^{\top}\bm{\mu}\left(\bm{\delta}\right)\right)=\frac{1}{NT}\left(-\bm{y}^{\top}\left(\bm{Z}\bm{\alpha}+\mathbb{X}_{1}\bm{\varsigma}_{\left(1\right)}+\mathbb{M}_{1}\bm{\xi}_{\left(1\right)}\right)+\bm{1}^{\top}\bm{\mu}\left(\bm{\delta}\right)\right)$,
here $\bm{\mu}\left(\bm{\delta}\right)=\left(n_{it}\exp\left(b_{it}\right)\right)_{1\le i\leq N,1\leq t\leq T}$,
with $\bm{b}=\mathbb{U}_{1}\bm{\delta}$ . We now show that there
exists a strict local minimizer of $\hat{\bm{\theta}}_{1}$ of $\bar{Q}\left(\bm{\delta}\right)$
such that $\Vert\hat{\bm{\theta}}_{1}-\bm{\theta}_{1,0}\Vert=O_{P}\left(\sqrt{s/N_{0}}\right)$,
where $N_{0}=NT$.

Define an event
\[
H_{1}=\left\{ \bar{Q}\left(\bm{\theta}_{1,0}\right)<\min_{\bm{\delta}\in\partial\mathcal{N}_{C}}\bar{Q}\left(\bm{\delta}\right)\right\},
\]
where $\partial\mathcal{N}_{C}$ denotes the boundary of the closet
set $\mathcal{N}_{C}=\left\{ \Vert\bm{\delta}-\bm{\theta}_{1,0}\Vert\leq C\sqrt{s/N_{0}}\right\} $
and $C\in\left(0,\infty\right)$. On event $H_{1}$, there exists
a local minimizer $\hat{\bm{\theta}}_{1}$ of $\bar{Q}\left(\bm{\delta}\right)$
in $\mathcal{N}_{C}$. We need to show that $P\left(H_{1}\right)$
is close to 1 as $T\rightarrow\infty$ when $C$ is large. 

Let $T$ be sufficiently large such that $\sqrt{s/N_{0}}C\leq d$ by Condition (C3),
$\delta_{\xi_{t}}$ is the element in $\bm{\delta}$ in $\mathcal{N}_{C}$
corresponds to $\xi_{t}$, and $\bm{\delta}_{\bm{\varsigma}_{i}}$
is the element in $\bm{\delta}$ in $\mathcal{N}_{C}$ corresponds
to $\bm{\varsigma}_{i}$. By Taylor expansion, for $\bm{\delta}\in\partial\mathcal{N}_{C}$,
\[
\bar{Q}\left(\bm{\delta}\right)-\bar{Q}\left(\bm{\theta}_{1,0}\right)=\left(\bm{\delta}-\bm{\theta}_{1,0}\right)^{\top}\bm{v}+\frac{1}{2}\left(\bm{\delta}-\bm{\theta}_{1,0}\right)^{\top}\bm{D}\left(\bm{\delta}-\bm{\theta}_{1,0}\right),
\]
where $\bm{v}$ is the first order derivative of $\bar{Q}\left(\bm{\delta}\right)$
evaluated at $\bm{\theta}_{1,0}$, and $\bm{D}$ is the second order
derivative of $\bar{Q}\left(\bm{\delta}\right)$ evaluated at $\bm{\theta}_{1}^{*}$
with $\bm{\theta}_{1}^{*}=\alpha_{1}\bm{\delta}+\left(1-\alpha_{1}\right)\bm{\theta}_{1,0}$
and $\alpha_{1}\in\left(0,1\right)$. We have 
\begin{align*}
 \bm{v}= & -\frac{1}{NT}\mathbb{U}_{1}^{\top}\left(\bm{y}-\bm{\mu}\left(\bm{\theta}_{1,0}\right)\right) \\
 +&\left(\bm{0},\frac{\bm{\varsigma}_{i}}{\Vert\bm{\varsigma}_{i}\Vert}\mathcal{P}_{\gamma_{2}}^{\prime}\left(\Vert\bm{\varsigma}_{i}\Vert,\lambda_{2}\right);i=2,\dots,s_{2}+1,\mathcal{P}_{\gamma_{1}}^{\prime}\left(\xi_{t},\lambda_{1}\right);t=2,\dots,s_{1}+1\right), 
 \end{align*}

and 
\begin{align*}
\bm{D}= & \frac{1}{NT}\mathbb{U}_{1}^{\top}\bm{\Sigma}\left(\bm{\theta}_{1}^{*}\right)\mathbb{U}_{1}\\
+ & \text{\text{diag}}\left(\bm{0},\left[\frac{\bm{\varsigma}_{i}^{*}}{\Vert\bm{\varsigma}_{i}^{*}\Vert}\mathcal{P}_{\gamma_{2}}^{\prime}\left(\Vert\bm{\varsigma}_{i}^{*}\Vert,\lambda_{2}\right)\right]^{\prime};i=2,\dots,s_{2}+1,\mathcal{P}_{\gamma_{1}}^{\prime\prime}\left(\xi_{t}^{*},\lambda_{1}\right);t=2,\dots,s_{1}+1\right).
\end{align*}

Recall that $\bm{\delta}\in\partial\mathcal{N}_{C}$, thus $\Vert\bm{\delta}-\bm{\theta}_{1,0}\Vert=C\sqrt{s/N_{0}}$. Since $\bm{\theta}_{1}^{*}=\alpha_{1}\bm{\delta}+\left(1-\alpha_{1}\right)\bm{\theta}_{1,0}$, we have $\Vert\bm{\theta}_{1}^{*}-\bm{\theta}_{1,0}\Vert=\Vert\alpha_{1}\left(\bm{\delta}-\bm{\theta}_{1,0}\right)\Vert=\alpha_{1}\Vert\bm{\delta}-\bm{\theta}_{1,0}\Vert,$ thus
$\bm{\theta}_{1}^{*}\in\mathcal{N}_{0}$.  When $T$ is sufficiently
large, since $d\gg\sqrt{s/N_{0}},$, then, $\vert\xi_{t}^{*}\vert\geq\vert\xi_{t}\vert-C\sqrt{s/N_{0}}\geq d$
, and $\Vert\bm{\varsigma}_{i}^{*}\Vert\geq\Vert\bm{\varsigma}_{i}\Vert-C\sqrt{s/N_{0}}\geq d$ based on Condition (C3).
Based on the definition of MCP and assumption that $d\gg\max\left(\lambda_{1},\lambda_{2}\right)$,
we have $d\gg\max\left(\gamma_1\lambda_{1},\gamma_2 \lambda_{2}\right)$, thus
$\mathcal{P}_{\gamma_{1}}^{\prime\prime}\left(\xi_{t}^{*},\lambda_{1}\right)=0$
and $\mathcal{P}_{\gamma_{2}}^{\prime\prime}\left(\Vert\bm{\varsigma}_{i}^{*}\Vert,\lambda_{2}\right)=0$
and $\mathcal{P}_{\gamma_{2}}^{\prime}\left(\Vert\bm{\varsigma}_{i}^{*}\Vert,\lambda_{2}\right)=0$. Then, we have the following result: 
\begin{align*}
\left[\frac{\bm{\varsigma}_{i}^{*}}{\Vert\bm{\varsigma}_{i}^{*}\Vert}\mathcal{P}_{\gamma_{2}}^{\prime}\left(\Vert\bm{\varsigma}_{i}^{*}\Vert,\lambda_{2}\right)\right]^{\prime} & =\frac{\bm{\varsigma}_{i}^{*}\bm{\varsigma}_{i}^{*\top}}{\Vert\bm{\varsigma}_{i}^{*}\Vert^{2}}\mathcal{P}_{\gamma_{2}}^{\prime\prime}\left(\Vert\bm{\varsigma}_{i}^{*}\Vert,\lambda_{2}\right)+\frac{\mathcal{P}_{\gamma_{2}}^{\prime}\left(\Vert\bm{\varsigma}_{i}^{*}\Vert,\lambda_{2}\right)}{\Vert\bm{\varsigma}_{i}^{*}\Vert}\bm{I}_{p}-\frac{\bm{\varsigma}_{i}^{*}\bm{\varsigma}_{i}^{*\top}\mathcal{P}_{\gamma_{2}}^{\prime}\left(\Vert\bm{\varsigma}_{i}^{*}\Vert,\lambda_{2}\right)}{\Vert\bm{\varsigma}_{i}^{*}\Vert^{3}}=\bm{0}.
\end{align*}
Thus, $\lambda_{\min}\left(\bm{D}\right)=\lambda_{\min}\left(\frac{1}{NT}\mathbb{U}_{1}^{\top}\bm{\Sigma}\left(\bm{\theta}_{1}^{*}\right)\mathbb{U}_{1}\right)\geq c_{1}M_{1}$ based on Condition (C1) and (C2). Also within $\mathcal{N}_{C}$, we have $\vert\delta_{\xi_{t}}\vert\geq\vert\xi_{t}\vert-C\sqrt{s/N_{0}}\geq d$,
and $\Vert\bm{\delta}_{\bm{\varsigma}_{i}}\Vert\geq\Vert\bm{\varsigma}_{i}\Vert-C\sqrt{s/N_{0}}\geq d$, 
$\Vert\bm{\varsigma}_{i}\Vert\geq d$ and $\vert\xi_{t}\vert\geq d$,
thus $\mathcal{P}_{\gamma_{2}}^{\prime}\left(\Vert\bm{\varsigma}_{i}\Vert,\lambda_{2}\right)=0$
and $\mathcal{P}_{\gamma_{1}}^{\prime}\left(\xi_{t},\lambda_{1}\right)=0$
based on $d\gg\max\left(\gamma_{1}\lambda_{1},\gamma_{2}\lambda_{2}\right)$. Thus, we have
\begin{align*}
 & \min_{\bm{\delta}\in\partial\mathcal{N}_{C}}\bar{Q}\left(\bm{\delta}\right)-\bar{Q}\left(\bm{\theta}_{1,0}\right)\\
\geq & -\Vert\frac{1}{NT}\mathbb{U}_{1}^{\top}\left(\bm{y}-\bm{\mu}\left(\bm{\theta}_{1,0}\right)\right)\Vert\Vert\bm{\delta}-\bm{\theta}_{1,0}\Vert + c_1M_1 \frac{1}{2} \Vert \bm{\delta} - \bm{\theta}_{1,0}\Vert^2
\\
= & -C\sqrt{s/N_{0}}\Vert\frac{1}{NT}\mathbb{U}_{1}^{\top}\left(\bm{y}-\bm{\mu}\left(\bm{\theta}_{1,0}\right)\right)\Vert+c_{1}M_{1}\frac{1}{2}\Vert\bm{\delta}-\bm{\theta}_{1,0}\Vert^{2}\\
= & -C\sqrt{s/N_{0}}\left[\Vert\frac{1}{NT}\mathbb{U}_{1}^{\top}\left(\bm{y}-\bm{\mu}\left(\bm{\theta}_{1,0}\right)\right)\Vert-\frac{c_{1}M_{1}C\sqrt{s/N_{0}}}{2}\right].
\end{align*}

Based on Markov's inequality, 
\begin{align*}
P\left(H_{1}\right) & \geq P\left[\Vert\frac{1}{NT}\mathbb{U}_{1}^{\top}\left(\bm{y}-\bm{\mu}\left(\bm{\theta}_{1,0}\right)\right)\Vert^{2}<\frac{c_{1}^{2}M_{1}^{2}C^{2}s}{4N_{0}}\right]\\
 & \geq1-\frac{4N_{0}E\Vert\frac{1}{NT}\mathbb{U}_{1}^{\top}\left(\bm{y}-\bm{\mu}\left(\bm{\theta}_{1,0}\right)\right)\Vert^{2}}{c_{1}^{2}M_{1}^{2}C^{2}s}.
\end{align*}

We know that $E\left(\bm{y}\right)=\bm{\mu}\left(\bm{\theta}_{0}\right)$, $Var\left(\bm{y}\right)=\bm{\Sigma}\left(\bm{\theta}_{0}\right)=\text{diag}\left(\bm{\mu}\left(\bm{\theta}_{1,0}\right)\right)$, and \\
$E\Vert\frac{1}{NT}\mathbb{U}_{1}^{\top}\left(\bm{y}-\bm{\mu}\left(\bm{\theta}_{1,0}\right)\right)\Vert^{2}=\frac{1}{N_{0}^{2}}\text{tr}\left(\mathbb{U}_{1}^{\top}\bm{\Sigma}\left(\bm{\theta}_{0}\right)\mathbb{U}_{1}\right)\leq\frac{sc_{2}M_{2}}{N_{0}}$ by Condition (C1) and (C2). Thus
\[
P\left(H_{1}\right)\geq1-\frac{4N_{0}sc_{2}M_{2}}{N_{0}c_{1}^{2}M_{1}^{2}C^{2}s}=1-\frac{4c_{2}M_{2}}{c_{1}^{2}M_{1}^{2}}\frac{1}{C^{2}}.
\]
As $C\rightarrow\infty$, $P\left(H_{1}\right)\rightarrow1$. This proves that $\Vert \hat{\bm{\theta}}_1 - \bm{\theta}_{1,0}\Vert = O_P(\sqrt{s/N_0})$.

\paragraph{Step 2: Sparsity. \newline}

Let $\hat{\bm{\theta}}=\left(\hat{\bm{\theta}}_{1},\hat{\bm{\theta}}_{2}\right)$,
where $\hat{\bm{\theta}}_{1}\in\mathcal{N}_{C}\subset\mathcal{N}_{0}$
is a strict local minimzer of $\bar{Q}\left(\bm{\delta}\right)$ and
$\hat{\bm{\theta}}_{2}=\bm{0}$. We need to show that $\hat{\bm{\theta}}$
is a strict local minimzer of $Q\left(\bm{\theta}\right)$.

Based on conditions (C1) and (C2) and the definition of $\hat{\bm{\theta}}_1$, conditions in \eqref{eq_kkt1}, \eqref{eq_kkt2}, \eqref{eq_kkt3}, \eqref{eq_kkt4} \eqref{eq_kkt7}, \eqref{eq_kkt8}, \eqref{eq_kkt9} and \eqref{eq_kkt10} are satisfied. Thus, it suffices to check conditions \eqref{eq_kkt5} and \eqref{eq_kkt6}.

Let $\bm{\varphi}=\mathbb{U}^{\top}\left(\bm{y}-\bm{\mu}\left(\bm{\theta}_{0}\right)\right)$,
and consider the event 
\[
\mathcal{E}_{1}=\left\{ \max_{i=s_{2}+2,\dots N}\Vert\mathbb{X}_{2\left(i\right)}\left(\bm{y}-\bm{\mu}\left(\bm{\theta}_{0}\right)\right)\Vert\leq u\sqrt{N_{0}};\quad\sup_{t=s_{1}+2,\dots,T}\vert\mathbb{M}_{2(t-1)}\left(\bm{y}-\bm{\mu}\left(\bm{\theta}_{0}\right)\right)\vert\leq u\sqrt{N_{0}}\right\} .
\]

From \cite{fan2011nonconcave}, we have 
\begin{equation}
  P\left(\vert\bm{a}^{\top}\left(\bm{y}-\bm{\mu}\left(\bm{\theta}_{0}\right)\right)\vert>\Vert\bm{a}\Vert\epsilon\right)\leq2e^{-c_{3}\epsilon^{2}},  
  \label{eq_prob}
\end{equation}
where $c_{3}$ is a positive constant. Let $\mathbb{M}_{\left(j\right)}$
be the $j$th column of $\mathbb{M}$ and satisfy $\max_j\Vert\mathbb{M}_{\left(j\right)}\Vert=\sqrt{\left(T-1\right)N}$
based on the construction of $\mathbb{M}$. Let $\mathbb{X}_{2\left(ij\right)}$
be the $j$th column of $\mathbb{X}_{2\left(i\right)}$ and $\max_{i,j}\Vert\mathbb{X}_{2\left(i,j\right)}\Vert=O\left(\sqrt{N_{0}}\right)$
based on condition (C1). It follows from Bonferroni's
inequality, we have
\begin{align*}
P\left(\mathcal{E}_{1}\right) & \geq1-\sum_{i=s_{2}+2}^{N}\sum_{j=1}^{p}P\left(\Vert\mathbb{X}_{2\left(ij\right)}^{\top}\left(\bm{y}-\bm{\mu}\left(\bm{\theta}_{0}\right)\right)\Vert>u\sqrt{N_{0}}\right) \\
&-\sum_{t=s_{1}+2}^{T}P\left(\vert\mathbb{M}_{2(t-1)}^{\top}\left(\bm{y}-\bm{\mu}\left(\bm{\theta}_{0}\right)\right)\vert>u\sqrt{N_{0}}\right),
\end{align*}
where $u\gg\sqrt{\log N_{0}}$ and $uN_{0}^{-1/2}=o\left(1\right)$. 

From \eqref{eq_prob}, we have
\begin{align*}
P\left(\Vert\mathbb{X}_{2\left(ij\right)}^{\top}\left(\bm{y}-\bm{\mu}\left(\bm{\theta}_{0}\right)\right)\Vert>u\sqrt{N_{0}}\right) & =P\left(\Vert\mathbb{X}_{2\left(ij\right)}^{\top}\left(\bm{y}-\bm{\mu}\left(\bm{\theta}_{0}\right)\right)\Vert>\Vert\mathbb{X}_{2\left(ij\right)}^{\top}\Vert\frac{u\sqrt{N_{0}}}{\Vert\mathbb{X}_{2\left(ij\right)}\Vert}\right)\\
 & \leq P\left(\Vert\mathbb{X}_{2\left(ij\right)}^{\top}\left(\bm{y}-\bm{\mu}\left(\bm{\theta}_{0}\right)\right)\Vert>\Vert\mathbb{X}_{2\left(ij\right)}^{\top}\Vert\frac{u\sqrt{N_{0}}}{\max_{i,j}\Vert\mathbb{X}_{2\left(ij\right)}\Vert}\right)\\
 & \leq2e^{-c_{3}u^{2}}
\end{align*}
and 
\begin{align*}
P\left(\vert\mathbb{M}_{2(t-1)}^{\top}\left(\bm{y}-\bm{\mu}\left(\bm{\theta}_{0}\right)\right)\vert>u\sqrt{N_{0}}\right) & =P\left(\vert\mathbb{M}_{2(t-1)}^{\top}\left(\bm{y}-\bm{\mu}\left(\bm{\theta}_{0}\right)\right)\vert>\Vert\mathbb{M}_{2(t-1)}^{\top}\Vert\frac{u\sqrt{N_{0}}}{\Vert\mathbb{M}_{2(t-1)}\Vert}\right)\\
 & \leq P\left(\vert\mathbb{M}_{2(t-1)}^{\top}\left(\bm{y}-\bm{\mu}\left(\bm{\theta}_{0}\right)\right)\vert>\Vert\mathbb{M}_{2(t-1)}^{\top}\Vert\frac{u\sqrt{N_{0}}}{\max_j\Vert\mathbb{M}_{2(t-1)}\Vert}\right)\\
 & \leq P\left(\vert\mathbb{M}_{2(t-1)}^{\top}\left(\bm{y}-\bm{\mu}\left(\bm{\theta}_{0}\right)\right)\vert>\Vert\mathbb{M}_{2(t-1)}^{\top}\Vert\frac{u\sqrt{N_{0}}}{\sqrt{N_{0}}}\right)\\
 & \leq2e^{-c_{3}u^{2}}.
\end{align*}
Thus 
\[
P\left(\mathcal{E}_{1}\right)\geq1-2\left((N-s_{2}-1)p+T-s_{1}-1\right)e^{-c_{3}u^{2}}\geq1-2\frac{Np+T}{e^{c_{3}u^{2}}}\geq 1-\frac{2}{e^{c_{3}u^{2}-\log N_{0}}}\rightarrow1.
\]

Under event $\mathcal{E}_{1}$, now consider the following in \eqref{eq_kkt5},
\[
\max_{i}\Vert\frac{1}{NT}\mathbb{X}_{2(i)}^{\top}\left(\bm{y}-\bm{\mu}(\hat{\bm{\theta}})\right)\Vert\leq \max_{i}\Vert\frac{1}{NT}\mathbb{X}_{2\left(i\right)}^{\top}\left(\bm{y}-\bm{\mu}\left(\bm{\theta}_{0}\right)\right)\Vert+ \max_{i}\Vert\frac{1}{NT}\mathbb{X}_{2(i)}^{\top}\left(\bm{\mu}\left(\bm{\theta}_{0}\right)-\bm{\mu}(\hat{\bm{\theta}})\right)\Vert.
\]

The first part is bounded by $\frac{u\sqrt{N_{0}}}{N_{0}}=uN_{0}^{-1/2}\ll\lambda_{2}$ by condition (C3) under $\mathcal{E}_{1}$. 

Next, consider the second part $\frac{1}{NT}\mathbb{X}_{2(i)}^{\top}\left(\bm{\mu}(\hat{\bm{\theta}})-\bm{\mu}\left(\bm{\theta}_{0}\right)\right)$.
According to the Taylor expansion,
\begin{align*}
\mathbb{X}_{2(i)}^{\top}\left(\bm{\mu}(\hat{\bm{\theta}})-\bm{\mu}\left(\bm{\theta}_{0}\right)\right) & =\mathbb{X}_{2\left(i\right)}^{\top}\left(\bm{\mu}(\hat{\bm{\theta}}_{1})-\bm{\mu}\left(\bm{\theta}_{1,0}\right)\right)\\
 & =\mathbb{X}_{2\left(i\right)}^{\top}\bm{\Sigma}\left(\bm{\theta}_{1,0}\right)\mathbb{X}_{1}\left(\hat{\bm{\theta}}_{1}-\bm{\theta}_{1,0}\right)+\bm{w},
\end{align*}
where $w_{j}=\frac{1}{2}\left(\hat{\bm{\theta}}_{1}-\bm{\theta}_{1,0}\right)^{\top}\nabla_{j}\left(\hat{\bm{\theta}}_{1}-\bm{\theta}_{1,0}\right)$,
$\nabla_{j}=\mathbb{X}_{1}^{\top}\text{\text{diag}}\left(x_{2(ij)}\bm{\mu}\left(\bm{\theta}_{1}^{*}\right)\right)\mathbb{X}_{1}$ and $\bm{\theta}_1^*$ is a vector lying on the line segment jointing $\hat{\bm{\theta}}_1$ and $\bm{\theta}_{1,0}$. 
Since all $x$ are bounded, then $\Vert\bm{w}\Vert_{\infty}\leq\frac{1}{2}M_{x}N_{0}\lambda_{\max}\left[\frac{1}{NT}\mathbb{X}_{1}^{\top}\mathbb{X}_{1}\right]\Vert\hat{\bm{\theta}}_{1}-\bm{\theta}_{1,0}\Vert^{2}=O\left(N_{0}\right)\Vert\hat{\bm{\theta}}_{1}-\bm{\theta}_{1,0}\Vert^{2}$,
 where $M_{x}$ is a positive constant. Similarly,
\[
\Vert\mathbb{X}_{2\left(i\right)}^{\top}\bm{\Sigma}\left(\bm{\theta}_{1,0}\right)\mathbb{X}_{1}\left(\hat{\bm{\theta}}_{1}-\bm{\theta}_{1,0}\right)\Vert=O\left(N_{0}\right)\Vert\hat{\bm{\theta}}_{1}-\bm{\theta}_{1,0}\Vert.
\]
Thus,
\[
\max_i\Vert\frac{1}{NT}\mathbb{X}_{2(i)}^{\top}\left(\bm{y}-\bm{\mu}\left(\hat{\bm{\theta}}\right)\right)\Vert=o\left(\lambda_{2}\right)+O\left(\Vert\hat{\bm{\theta}}_{1}-\bm{\theta}_{1,0}\Vert+\Vert\hat{\bm{\theta}}_{1}-\bm{\theta}_{1,0}\Vert^{2}\right).
\]
Since $\lambda_{2}\gg\sqrt{s/N_{0}},$ thus $\Vert\hat{\bm{\theta}}_{1}-\bm{\theta}_{1,0}\Vert+\Vert\hat{\bm{\theta}}_{1}-\bm{\theta}_{1,0}\Vert^{2}=o\left(\lambda_{2}\right)$.
Thus $\max_i\Vert\frac{1}{NT}\mathbb{X}_{2(i)}^{\top}\left(\bm{y}-\bm{\mu}\left(\hat{\bm{\theta}}\right)\right)\Vert=o\left(\lambda_{2}\right)$.
The condition in \eqref{eq_kkt5} holds. 

By similar arguments, we can have \eqref{eq_kkt6} hold. This completes the proof.

\end{proof}

\subsection{Proof of Theorem 2}
\label{subsec:thm2}
\begin{proof}
Let $\mathcal{W}$ be an event that for a given MST, $\mathcal{T}$, for
any two locations $i$ and $i^{\prime}$ in the same cluster, there
exists a path in the MST connecting $i$ and $i^{\prime}$ such that
all the locations on the path belong to the same cluster. This means
that in event $\mathcal{W}$, when the nonzero values of $\bm{\varsigma}_{i}$ are identified, which means the nonzero edges are identified. This implies that the true cluster structure of $\bm{\beta}$ can be recovered. 

For $C_0 \in (0,\infty)$ , we have 
\begin{align*}
P\left(\Vert\hat{\bm{\theta}}-\bm{\theta}_{0}^{*}\Vert\leq C_{0}\sqrt{s/N_{0}}\right) & \geq P\left((\Vert\hat{\bm{\theta}}-\bm{\theta}_{0}^{*}\Vert\leq C_{0}\sqrt{s/N_{0}})\cap\mathcal{W}\right)\\
 & =P\left(\Vert\hat{\bm{\theta}}-\bm{\theta}_{0}^{*}\Vert\leq C_{0}\sqrt{s/N_{0}}\vert\mathcal{W}\right)P\left(\mathcal{W}\right).
\end{align*}
Since given $\mathcal{W}$, we know that $P\left(\Vert\hat{\bm{\theta}}-\bm{\theta}_{0}^{*}\Vert\leq C_{0}\sqrt{s/N_{0}}\vert\mathcal{W}\right)\rightarrow1$
as $T\rightarrow\infty$ from Theorem 1. And $P\left(\mathcal{W}\right)\rightarrow1$
as $T\rightarrow\infty$ from Lemma 2. Thus $P\left(\Vert\hat{\bm{\theta}}-\bm{\theta}_{0}^{*}\Vert\leq C_{0}\sqrt{s/N_{0}}\right)\rightarrow1$
as $T\rightarrow\infty$. 

Besides this, we also have the following results.
\begin{align*}
P\left(\hat{\mathcal{S}}^*_N=\mathcal{S}_N^*\right) & =P\left(\hat{\mathcal{S}}^*_N=\mathcal{S}_N^*\vert\mathcal{W}\right)P\left(\mathcal{W}\right)+P\left(\hat{\mathcal{S}}^*_N=\mathcal{S}_N\vert\mathcal{W}^{c}\right)P\left(\mathcal{W}^{c}\right)\\
 & \rightarrow 1,
\end{align*}
since $P(\mathcal{W}) \rightarrow 1$ and $P(\mathcal{W}^c) \rightarrow 0$. We can also have $P(\hat{\mathcal{S}}^*_T = \mathcal{S}_T)  \rightarrow 1$ by the same arguments. 
\end{proof}

\section{Scan approaches in real data analysis}

 When no covariate is considered, NPFSS, PSS and PPSS perform better than other approaches among the scan approaches, so we also use NPFSS, PSS and PPSS to analyze the dataset. PSS and PPSS give the same estimated cluster structure. Figure \ref{fig:scan_cluster} shows the estimated cluster structures based on NPFSS and PPS, respectively. They give very different cluster structures. And in our analysis, we consider the effects after adjusting the covariates, which is different from these scan approaches. 

\begin{figure}[H]
\centering
\begin{subfigure}{.45\textwidth}
  \centering
  \includegraphics[width=0.95\linewidth]{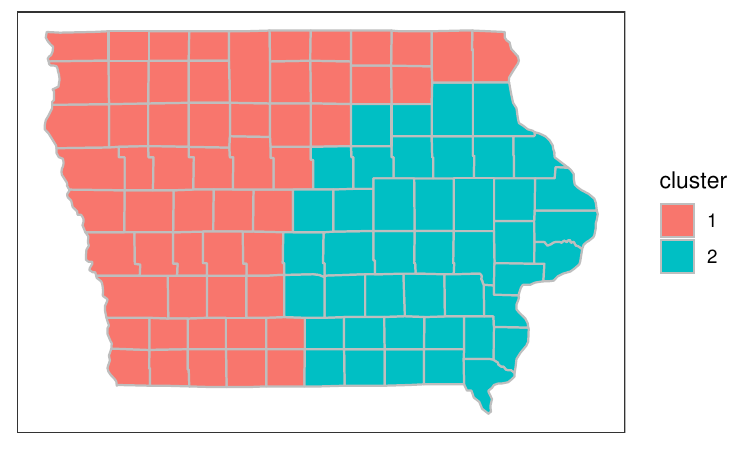}
  \caption{Estimated cluster structure based on NPFSS.}
  \label{fig:npfss}
\end{subfigure}%
\hspace{.1in}
\begin{subfigure}{.45\textwidth}
  \centering
  \includegraphics[width=0.95\linewidth]{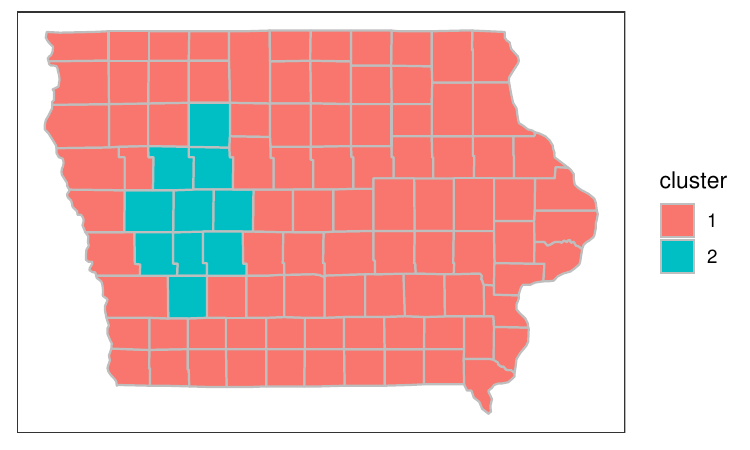}
  \caption{Estimated cluster structure based on PSS and PPSS.}
  \label{fig:pss}
\end{subfigure}
\caption{Estimated cluster structures based on scan approaches.}
\label{fig:scan_cluster}
\end{figure}

\end{appendices}

\bibliographystyle{elsarticle-harv} 
\bibliography{reference}

\end{document}